\documentclass[AMA, STIX1COL]{WileyNJD-v2}

\raggedbottom
\usepackage{float}
\usepackage{amsthm}

\theoremstyle{definition}
\newtheorem{example}{Example}

\newcommand{\proofs}[1][A]{#1ppendix}
\newcommand{\REACTlink}{\url{https://github.com/Monoxido45/REACT}}

\def\ourapproach{{\texttt{REACT}}}
\def\fourapproach{{\texttt{fREACT}}}
\def\bourapproach{{\texttt{bREACT}}}

\def\half{\frac{1}{2}}

\def\I{{\mathbb I}}

\def\D{{\mathcal{D}}}

\def\P{{\mathbb P}}

\def\Re{{\mathbb R}}

\def\t0{{\theta_0}}

\def\Tz{{\Theta_0}}
\def\Tzc{{\Theta_0^c}}
\def\limn{\lim_{n \rightarrow \infty}}

\usepackage{tikz}

\newcommand{\reactSimple}{
\begin{tikzpicture}[x=0.75pt,y=0.75pt,yscale=-1,xscale=1, every picture/.style={line width=0.75pt}]

\draw  [color={rgb, 255:red, 74; green, 144; blue, 226 }  ,draw opacity=0 ][fill={rgb, 255:red, 74; green, 144; blue, 226 }  ,fill opacity=0.2 ] (69.67,-0.42) -- (270.33,-0.42) -- (270.33,308.58) -- (69.67,308.58) -- cycle ;
\draw [line width=1.5]    (11.33,70.5) -- (427.33,70.5) ;
\draw [shift={(431.33,70.5)}, rotate = 180] [fill={rgb, 255:red, 0; green, 0; blue, 0 }  ][line width=0.08]  [draw opacity=0] (13.4,-6.43) -- (0,0) -- (13.4,6.44) -- (8.9,0) -- cycle    ;
\draw [line width=1.5]    (11.33,160.5) -- (427.33,160.5) ;
\draw [shift={(431.33,160.5)}, rotate = 180] [fill={rgb, 255:red, 0; green, 0; blue, 0 }  ][line width=0.08]  [draw opacity=0] (13.4,-6.43) -- (0,0) -- (13.4,6.44) -- (8.9,0) -- cycle    ;
\draw [line width=1.5]    (11.33,250.17) -- (427.33,250.17) ;
\draw [shift={(431.33,250.17)}, rotate = 180] [fill={rgb, 255:red, 0; green, 0; blue, 0 }  ][line width=0.08]  [draw opacity=0] (13.4,-6.43) -- (0,0) -- (13.4,6.44) -- (8.9,0) -- cycle    ;
\draw    (466.67,10) -- (466.33,300.33) ;
\draw    (169.67,60.33) -- (169.67,80.33) ;
\draw    (169.67,150.33) -- (169.67,170.33) ;
\draw    (269.67,60.33) -- (269.67,80.33) ;
\draw    (70.67,60.33) -- (70.67,80.33) ;
\draw    (70.67,150.33) -- (70.67,170.33) ;
\draw    (269.67,150.33) -- (269.67,170.33) ;
\draw    (169.67,240.67) -- (169.67,260.67) ;
\draw    (70.67,240.67) -- (70.67,260.67) ;
\draw    (269.67,240.67) -- (269.67,260.67) ;
\draw [color={rgb, 255:red, 65; green, 117; blue, 5 }  ,draw opacity=1 ][line width=2.25]    (81.5,70.75) -- (201,70.75) ;
\draw [shift={(201,70.75)}, rotate = 180] [color={rgb, 255:red, 65; green, 117; blue, 5 }  ,draw opacity=1 ][line width=2.25]      (7.83,-7.83) .. controls (3.51,-7.83) and (0,-4.32) .. (0,0) .. controls (0,4.32) and (3.51,7.83) .. (7.83,7.83) ;
\draw [shift={(81.5,70.75)}, rotate = 0] [color={rgb, 255:red, 65; green, 117; blue, 5 }  ,draw opacity=1 ][line width=2.25]      (7.83,-7.83) .. controls (3.51,-7.83) and (0,-4.32) .. (0,0) .. controls (0,4.32) and (3.51,7.83) .. (7.83,7.83) ;
\draw [color={rgb, 255:red, 149; green, 0; blue, 18 }  ,draw opacity=1 ][line width=2.25]    (291.5,160.25) -- (411,160.25) ;
\draw [shift={(411,160.25)}, rotate = 180] [color={rgb, 255:red, 149; green, 0; blue, 18 }  ,draw opacity=1 ][line width=2.25]      (7.83,-7.83) .. controls (3.51,-7.83) and (0,-4.32) .. (0,0) .. controls (0,4.32) and (3.51,7.83) .. (7.83,7.83) ;
\draw [shift={(291.5,160.25)}, rotate = 0] [color={rgb, 255:red, 149; green, 0; blue, 18 }  ,draw opacity=1 ][line width=2.25]      (7.83,-7.83) .. controls (3.51,-7.83) and (0,-4.32) .. (0,0) .. controls (0,4.32) and (3.51,7.83) .. (7.83,7.83) ;
\draw [color={rgb, 255:red, 215; green, 197; blue, 0 }  ,draw opacity=1 ][line width=2.25]    (193.5,250.25) -- (313,250.25) ;
\draw [shift={(313,250.25)}, rotate = 180] [color={rgb, 255:red, 215; green, 197; blue, 0 }  ,draw opacity=1 ][line width=2.25]      (7.83,-7.83) .. controls (3.51,-7.83) and (0,-4.32) .. (0,0) .. controls (0,4.32) and (3.51,7.83) .. (7.83,7.83) ;
\draw [shift={(193.5,250.25)}, rotate = 0] [color={rgb, 255:red, 215; green, 197; blue, 0 }  ,draw opacity=1 ][line width=2.25]      (7.83,-7.83) .. controls (3.51,-7.83) and (0,-4.32) .. (0,0) .. controls (0,4.32) and (3.51,7.83) .. (7.83,7.83) ;

\draw (431,144.07) node [anchor=north west][inner sep=0.75pt]  [font=\Large]  {$\boldsymbol{\phi }$};
\draw (431.67,54.73) node [anchor=north west][inner sep=0.75pt]  [font=\Large]  {$\boldsymbol{\phi }$};
\draw (431,234.07) node [anchor=north west][inner sep=0.75pt]  [font=\Large]  {$\boldsymbol{\phi }$};
\draw (510,3) node [anchor=north west][inner sep=0.75pt]   [align=left] {\textbf{{\LARGE \underline{Decision}}}};
\draw (481.67,60) node [anchor=north west][inner sep=0.75pt]  [font=\LARGE] [align=left] {Accept $\displaystyle H_{0}$};
\draw (481.67,150) node [anchor=north west][inner sep=0.75pt]  [font=\LARGE] [align=left] {Reject $\displaystyle H_{0}$};
\draw (481.67,240) node [anchor=north west][inner sep=0.75pt]  [font=\LARGE] [align=left] {Remain Agnostic};
\draw (163.5,80) node [anchor=north west][inner sep=0.75pt]   [align=left] {{\large \textbf{{\fontfamily{pcr}\selectfont 0}}}};
\draw (263,80.07) node [anchor=north west][inner sep=0.75pt]    {$\large\boldsymbol{\Delta }$};
\draw (52.33,80.07) node [anchor=north west][inner sep=0.75pt]    {$\large\boldsymbol{-\Delta }$};
\draw (163.5,170) node [anchor=north west][inner sep=0.75pt]   [align=left] {{\large \textbf{{\fontfamily{pcr}\selectfont 0}}}};
\draw (263,170.4) node [anchor=north west][inner sep=0.75pt]    {$\large\boldsymbol{\Delta }$};
\draw (52.33,170.4) node [anchor=north west][inner sep=0.75pt]    {$\large\boldsymbol{-\Delta }$};
\draw (163.5,261) node [anchor=north west][inner sep=0.75pt]   [align=left] {{\large \textbf{{\fontfamily{pcr}\selectfont 0}}}};
\draw (263,260.73) node [anchor=north west][inner sep=0.75pt]    {$\large\boldsymbol{\Delta }$};
\draw (52.33,260.73) node [anchor=north west][inner sep=0.75pt]    {$\large\boldsymbol{-\Delta }$};
\draw (181.5,61.5) node  [color={rgb, 255:red, 65; green, 117; blue, 5 }  ,opacity=1 ] [align=left] {\begin{minipage}[lt]{68pt}\setlength\topsep{0pt}
{\LARGE \textbf{\textit{{\fontfamily{pcr}\selectfont C}}}}
\end{minipage}};
\draw (389.5,150.5) node  [color={rgb, 255:red, 149; green, 0; blue, 18 }  ,opacity=1 ] [align=left] {\begin{minipage}[lt]{68pt}\setlength\topsep{0pt}
{\fontfamily{pcr}\selectfont {\LARGE \textbf{\textit{C}}}}
\end{minipage}};
\draw (293.5,240.5) node  [color={rgb, 255:red, 215; green, 197; blue, 0 }  ,opacity=1 ] [align=left] {\begin{minipage}[lt]{68pt}\setlength\topsep{0pt}
{\fontfamily{pcr}\selectfont {\LARGE \textbf{\textit{C}}}}
\end{minipage}};

\end{tikzpicture}
}

\usepackage{blindtext}

\usepackage{float}
\usepackage{graphicx}
\graphicspath{{figures/}{../figures/}}
\usepackage{caption}
\usepackage{subcaption}

\DeclareFontFamily{U}{mathb}{\hyphenchar\font45}
\DeclareFontShape{U}{mathb}{m}{n}{ <-6> matha5 <6-7> matha6 <7-8>
mathb7 <8-9> mathb8 <9-10> mathb9 <10-12> mathb10 <12-> mathb12 }{}
\DeclareSymbolFont{mathb}{U}{mathb}{m}{n}

\DeclareMathAccent{\abxring}{0}{mathb}{"38}

\DeclareFontFamily{U}{mathb}{\hyphenchar\font45}
\DeclareFontShape{U}{mathb}{m}{n}{ <-6> matha5 <6-7> matha6 <7-8>
mathb7 <8-9> mathb8 <9-10> mathb9 <10-12> mathb10 <12-> mathb12 }{}
\DeclareSymbolFont{mathb}{U}{mathb}{m}{n}

\begin{document}

\title{\ourapproach{} to NHST: Sensible conclusions for meaningful hypotheses}

\author[1]{Rafael Izbicki*}

\author[1,2]{Luben M. C. Cabezas}

\author[3]{Fernando A. B. Colugnatti}

\author[1,2]{Rodrigo F. L. Lassance}
\author[4]{Altay A. L. de Souza}
\author[5]{Rafael B. Stern}

\authormark{RAFAEL IZBICKI \textsc{et al}}

\address[1]{\orgdiv{Department of Statistics}, \orgname{Federal University of São Carlos}, \orgaddress{\state{São Paulo}, \country{Brazil}}}
\address[2]{\orgdiv{Institute of Mathematics and Computer Sciences}, \orgname{University of São Paulo}, \orgaddress{\state{São Paulo}, \country{Brazil}}}
\address[3]{\orgdiv{School of Medicine}, \orgname{Federal University of Juiz de Fora}, \orgaddress{\state{Minas Gerais}, \country{Brazil}}}
\address[4]{\orgdiv{Psychobiology Department}, \orgname{Federal University of São Paulo}, \orgaddress{\state{São Paulo}, \country{Brazil}}}
\address[5]{\orgdiv{Institute of Mathematics and Statistics}, \orgname{University of São Paulo}, \orgaddress{\state{São Paulo}, \country{Brazil}}}

\corres{*Rafael Izbicki  \email{rafaelizbicki@gmail.com}}

\fundingInfo{FAPESP (grants 2019/11321-9, 2022/08579-7 and 2023/07068-1) and CNPq (grants 309607/2020-5 and 422705/2021-7). This study was financed in part by the Coordenação de Aperfeiçoamento de Pessoal de Nível Superior - Brasil (CAPES) - Finance Code 001.
}
\abstract[Summary]{




While Null Hypothesis Significance Testing (NHST) remains a widely used statistical tool, it suffers from several shortcomings in its common usage, such as conflating statistical and practical significance, the formulation of inappropriate null hypotheses, and the inability to distinguish between accepting the null hypothesis and failing to reject it. Recent efforts have focused on developing alternatives that address these issues. Despite these efforts, conventional NHST remains dominant in scientific research due to its procedural simplicity and mistakenly presumed ease of interpretation.
Our work presents an intuitive alternative to conventional NHST designed to bridge the gap between the expectations of researchers and the actual outcomes of hypothesis tests: \ourapproach{}. \ourapproach\ not only tackles  shortcomings of conventional NHST but also offers additional advantages over existing alternatives. For instance, \ourapproach\ applies to multiparametric hypotheses and does not require stringent significance-level corrections when conducting multiple tests. We illustrate the practical utility of \ourapproach\  through real-world data examples.
}

\keywords{hypothesis tests, NHST, p-values, equivalence tests, three-way decision procedures}

\maketitle


\section{Introduction}\label{sec:intro}

Statistical hypothesis testing is a fundamental tool in scientific research, offering a structured approach to tackling research questions. In the field of clinical research, it goes beyond being a mere recommendation and becomes a nearly-mandatory requirement for publishing results and aiding important decision-making. For example, for many years, clinical trial design has remained heavily reliant on calculating sample sizes by evaluating the statistical power of hypothesis tests.

However, there has been a growing wave of criticism directed at conventional Null Hypothesis Significance Testing\footnote{By ``conventional NHST" we refer to the approach typically used by practitioners of NHST, which involves (i) the formulation of a null hypothesis usually pointing to no effects (e.g., $H_0: \mu_A = \mu_B$ in a two-sample tests problem), and (ii) the establishment of a statistical test with two outcomes for $H_0$ that controls the type I error at a predefined level of $\alpha$.} (NHST) and p-values \citep{hays1963statistics,Wasserman2013,Trafimow2018,Pike2019,Greenland2016,Kadane2016,Wasserstein2019, Diggle2011}. Much of this criticism arises from the misuse and misinterpretation of statistical tests. Even in meta-analytic studies, hypothesis tests often lead to misinterpretations that have a high impact on public policymaking. As a result, many scientific journals have taken a position against the use of conventional NHST, often discouraging its use \citep{campo2008interpretation,Trafimow2015}.

Despite the ongoing criticisms and the several alternatives proposed (see Section \ref{sec:relatedWork} for a review), conventional NHST remains the widely accepted standard in scientific research. 
This continued popularity can be explained by various factors, including its operational simplicity. Additionally, though NHST results are nuanced and complex, they often give a false impression of being easy to interpret.

It is therefore crucial to provide an alternative to conventional NHST while keeping as much of its operational simplicity as possible.
This alternative should bridge the gap between researchers' expectations and the actual outcomes of tests, preventing misinterpretations. For instance, distinguishing between accepting and failing to reject the null hypothesis is essential in practice. Therefore, the main goal of this work is to introduce a framework for hypothesis testing, \ourapproach, that better meets the needs of researchers. \ourapproach\ builds upon existing solutions that aim to improve NHST, incorporating elements from three-way tests and equivalence tests like Two One-Sided Tests (TOST; \citealt{schuirmann1987comparison}). For a detailed comparison with these and other methods,  see Section \ref{sec:relatedWork}. We also provide an R package that implements \ourapproach\ for common models.

Next, we revisit some of the major concerns of the standard approach to hypothesis testing. Section \ref{sec:react_simple} then introduces a simplified version of \ourapproach\ for a single hypothesis that only concerns one parameter. Section \ref{sec:relatedWork} shows how \ourapproach\ relates to other methods in the literature. Section \ref{sec:methods} introduces our full procedure, and presents its properties. Two applications are presented in Section \ref{sec:applications}. Section \ref{sec:final} concludes the paper.

\subsection{Review of some NHST issues}

\paragraph{Issue 1: Statistical significance versus practical significance.} 
One of the primary challenges of NHST is the difficulty in distinguishing between statistical significance and practical significance \citep{wasserstein2016asa}. 
This has been noted very early in psychology \citep{hays1963statistics}. 
Still, low p-values are often
used as a proxy for important practical significance.
As an example, a study on the effectiveness of aspirin in preventing myocardial infarction \citep{bartolucci2011meta,sullivan2012using}  found statistically significant results (p-value $<$ .00001), and therefore was stopped early due to conclusive evidence. As a result, many people were advised to take aspirin to prevent heart attacks. However, upon further investigation, the effect size was found to be practically insignificant, and the recommendation had to be revised \citep{sullivan2012using}. This raises important concerns about the practical relevance of the output of such hypothesis tests  and their implications for public health.

\paragraph{Issue 2: Implausibility of the null hypothesis.} 
There are very few situations in which one expects precise null hypotheses to be exactly true \citep{Edwards1963,amrhein2017earth,lecoutre2022significance}. 
For example, when comparing two medications, the primary concern is typically whether they are practically equivalent, as it is highly unlikely that any two medications will produce precisely the same effects. Therefore, establishing whether the medications have similar outcomes is often more relevant than attempting to evaluate if they have the same effect on average. Indeed, bioequivalence tests effectively modify the null hypothesis to align with practical equivalence (refer to Section \ref{sec:relatedWork}). This is also true in other domains. Indeed,
\citet{cohen1992things} mentions that
\begin{quote}
    ``The null hypothesis, taken literally (and that's the only way you can take it in formal hypothesis testing), is always false in the real world. It can only be true in the bowels of a computer processor running a Monte Carlo study (and even then a stray electron may make it false). If it is false, even to a tiny degree, it must be the case that a large enough sample will produce a significant result and lead to its rejection. So if the null is always false, what's the big deal about rejecting it?"
\end{quote}

A consequence of this fact is that, provided that the sample size is large enough, in most problems one will always reject the null hypothesis, making the results of NHST not meaningful \citep{vaughan1969beyond,cohen1992things,cohen1994earth,gill1999insignificance,Faber2014}.

\paragraph{Issue 3: Accept versus not-reject the null hypothesis.} 
The absence of evidence is not evidence of absence \citep{altman1995statistics}, but NHST is not able to differentiate between failure to reject 
 the null hypothesis and its acceptance. Indeed, \citet{Edwards1963} point out that
\begin{quote}
    ``If the null hypothesis is not rejected, it
remains in a kind of limbo of suspended disbelief."
\end{quote}
This is because a null may not be rejected either because the test is not powerful enough to reject it  (e.g. due to a small sample size) or because $H_0$ is indeed true. 
This has major implications for public policymaking, especially during times of crisis, such as the COVID-19 pandemic. It can be challenging to interpret research findings on the efficacy of interventions, such as the use of masks \citep{jefferson2023physical} or hydroxychloroquine \citep{mehra2020retracted}, when there is a lack of consensus on what constitutes evidence of absence versus no evidence of an effect \citep{fidler2018epistemic}. 
 It is therefore very important for researchers to distinguish between ``no evidence of effect" and ``evidence of no effect" when interpreting research results \citep{altman1995statistics,keysers2020using}.
Plain NHST cannot do that on its own.
 
\vspace{2mm}

In this paper, we overcome these issues by introducing the Region of Equivalence Agnostic Confidence-based Test (\ourapproach).

\subsection{Novelty}

Numerous solutions have been proposed to tackle issues  1, 2, and 3; see Section \ref{sec:relatedWork} for a detailed overview of how such approaches relate to our approach. 
In particular, \ourapproach\ uses the strengths of equivalence and three-way hypothesis testing.  
 However, to the best of our knowledge, \ourapproach\ is the first approach that simultaneously solves these and other issues.
Specifically, \ourapproach:
\begin{itemize}
\item Is designed to work with meaningful null hypotheses that encode practical significance
\item Clearly distinguishes between ``evidence of absence'' and ``absence of evidence''
\item Does not need ad hoc procedures to perform multiple comparisons. In fact, \ourapproach\ not only automatically controls the Family-Wise Error Rate (FWER) of false rejections at $\alpha$ (type I errors), but it also
controls the FWER of false acceptances at $\alpha$ (type II errors)
\item Can be easily applied to hypotheses that involve several parameters, such as in an ANOVA setting
\item Leads to fully logically coherent solutions. For instance, in a multiple comparison problem, if \ourapproach\ rejects the null $\mu_1=\mu_2$, it will also reject the null $\mu_1=\mu_2=\mu_3$.
This level of coherence is not typically achieved with standard procedures, resulting in epistemic confusion that complicates the reporting of test results.
\end{itemize}

\ourapproach\  requires the specification of two components: 
 (i) a confidence region for the parameters of interest, and (ii) a null hypothesis that reflects a range of
parameter values considered to be practically equivalent (e.g., in an ANOVA context, $H_0: |\mu_A-\mu_B| \leq \Delta$). Although part (ii) involves more in-depth thought compared to regular NHST,  determining $\Delta$ is essentially equivalent to deciding the minimum effect size of scientific interest, as is commonly done in standard power analysis for sample size calculations.
Thus, while our approach attempts to keep
the operational simplicity of standard NHST, it directly meets researchers' needs.

\subsection{\ourapproach}
\label{sec:react_simple}

To better illustrate how our approach builds upon existing work, we first introduce it in a simplified version.

In its simplest form 
\ourapproach \ is composed by the following steps:
\begin{enumerate}
    \item \textbf{[Establish the null hypothesis]} Define a null hypothesis $H_0$ by establishing a pragmatic hypothesis, which is a range of values considered to be practically equivalent. For example, if $\mu_A$ and $\mu_B$ are the average effects of drugs A and B on a desired outcome, the pragmatic null may be $H_0: |\mu_A-\mu_B|\leq \Delta$, where $\Delta$ is the smallest difference of practical interest (also known as  the
smallest effect size of interest -- SESOI \citep{lakens2017equivalence}). In this case,  $H_0$ is usually called the equivalence range \citep{bauer1996unifying} or a region of practical equivalence (ROPE) \citep{kruschke2018rejecting} and is extensively used in equivalence testing.
The task of setting this region may be complex, but it essentially parallels the steps taken in standard power analysis, specifically in pinpointing the significant portions of the alternative hypothesis that need high power.  See Section \ref{sec:relatedWork} for ideas on how $\Delta$ can be chosen. 
Similarly, one may want to test 
    $H_0: |\rho|\leq \Delta$, where $\rho$ is the correlation coefficient between two quantities of interest, or $H_0:|\beta_i|\leq \Delta$, where $\beta_i$ is the coefficient of the $i$-th covariate in a regression. For simplicity, we now assume that the null hypothesis has the shape  
    $$H_0: |\phi|\leq \Delta$$
    for some parameter $\phi$, although our test is more general (see Section \ref{sec:methods}).
    \item \textbf{[Build a confidence set]} Create $C$, a confidence set
    for the parameter of interest, $\phi$. That is, $C$ contains values of $\phi$ that are consistent with the dataset that was observed.
\item \textbf{[Test $H_0$ using $C$]} Test the null hypothesis using the following three-way rule:
  \begin{equation*}
    \text{Decision}=
    \begin{cases}
      \text{\textbf{Accept} $H_0$} \ \text{if all values of $C$  are smaller than $\Delta$ in absolute value} \\
      \text{\textbf{Reject} $H_0$} \ \text{if all values of $C$  are larger than $\Delta$  in absolute value} \\
      \text{\textbf{Remain Agnostic}} \ \text{otherwise} \\
    \end{cases}
  \end{equation*}
\end{enumerate}
This procedure is illustrated in Figure \ref{fig::simple}.

\begin{figure}[ht]
    \centering
        \resizebox{.8\linewidth}{!}{\reactSimple}
    \caption{Illustration of \ourapproach\ to test  hypotheses  of the type $H_0: |\phi|\leq \Delta$. $C$ is a confidence set built using the data; the blue region represents the null hypothesis.}
    \label{fig::simple}
\end{figure}

\vspace{2mm}

Figure \ref{fig:pragmatic_camcog} shows an example of application of \ourapproach\ to the problem of investigating whether
CAMCOG scores can distinguish between three groups of patients: control (CG), mild cognitive impairment (MCI), and Alzheimer's disease (AD).
In each plot, the dashed line represents the precise hypothesis of interest, which states that CAMCOG scores are equally distributed among the compared groups. The blue region represents the null hypotheses associated with each pair of groups. We display the outcome of \ourapproach\ as the mean differences' confidence intervals between each pair of groups at a given sample size. In each comparison, we start by randomly sampling two observations from each group to derive the initial confidence interval. Then, in each step, we randomly add a new observation from one of the groups of interest and obtain a new confidence interval. For small sample sizes, the test remains agnostic on all three hypotheses. As the sample size increases, the pragmatic hypothesis for AD vs Control is rejected, the one for AD vs MCI is inconclusive, and the one for Control vs MCI is accepted. We obtain the same conclusions when changing the sorting order, with slight changes in MCI vs AD (Figure \ref{fig:camcog_resampling_plots} in  \proofs{}).
More details about this example can be found in Section \ref{sec:applications}.

 \begin{figure}[ht]
  \centering
  \includegraphics[width=0.9\textwidth]{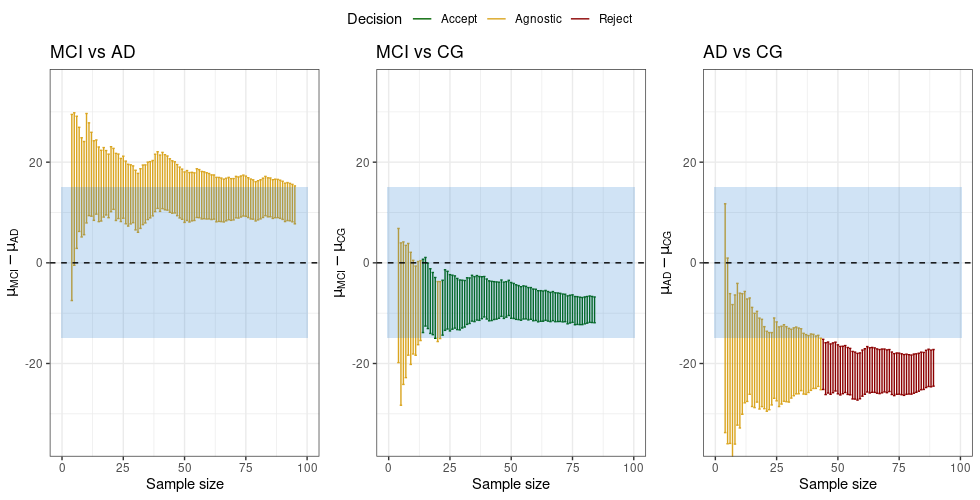}%
   \caption{Confidence intervals for
    the average difference between groups, $\mu_i-\mu_j$
    as a function of the sample size.
    The dashed line indicates the precise hypothesis 
    considered in each figure, $H_0: \mu_i = \mu_j$.
    The blue regions delimit
    the null hypotheses.}
  \label{fig:pragmatic_camcog}
 \end{figure}

\section{Connections to Existing Work}
\label{sec:relatedWork}

Next, we investigate the relationship between \ourapproach\ and similar approaches.
\vspace{2mm}

\paragraph{Power Analysis and Severity Tests.}
Power analysis is based on the following principle, suggested by J. Neyman \citep{neyman1957use}: 
\begin{quote}
``[If] the probability of detecting an appreciable error in the hypothesis
tested was large, say .95 or greater, then and only then is the decision in
favour of the hypothesis tested justifiable in the same sense as the decision
against this hypothesis is justifiable when an appropriate test rejects it at a
chosen level of significance."
\end{quote}
That is, according to this approach, if the power of the study is large (at some specific point of the alternative of interest, say $\P_{\phi=\Delta}(\text{reject})$ in the setting presented above), then a 
non-rejection can be interpreted  as acceptance (see also ``severe testing'' \citep{mayo2006severe,mayo2018statistical}---which uses a similar calculation to the p-value, though assuming a point of the alternative as true instead, to verify if one is warranted to accept the null hypothesis---for a related approach).
Thus, power analysis addresses issue 3 raised in the introduction by distinguishing between ``evidence of absence'' and ``absence of evidence''.
 Unfortunately, power analysis is often overlooked in favor of p-values, even when reported in the analysis. To address this issue, \ourapproach\ outputs a ``accept'', ``agnostic'' or ``reject'' decision, preventing any misinterpretation. 
\vspace{2mm}

\paragraph{Scheffé's method.}

Scheffé's method \citep{scheffe1999analysis} can be viewed as a special case of \ourapproach \ for an analysis of variance (ANOVA) model when testing hypotheses involving only linear combinations (contrasts) of treatment effects. Scheffé's method constructs the usual confidence ellipsoid for the treatment effects. To test the null hypothesis, it checks whether the line corresponding to the null hypothesis intersects the confidence ellipsoid. If such an intersection exists, the null hypothesis is not rejected; otherwise, it is rejected.

In contrast, \ourapproach \ allows consideration of a broader range of null hypotheses, including non-linear functions of treatment effects and pragmatic hypotheses derived from these functions. For pragmatic hypotheses, \ourapproach \ refines the "non-rejection" outcome into either "acceptance" or "remain undecided." Additionally, \ourapproach \ can be applied to various other models and is not restricted by the assumptions of ANOVA. See \autoref{sec:CAMCOG} for an application of \ourapproach \ that is similar to Scheffé's method when setting $\Delta = 0$, with the difference that \ourapproach \ is controlling for multiple testing.

\vspace{2mm}
\paragraph{Effect size estimates.}
Effect sizes are an attempt to address issues 1 and 2. They measure the strength  of a relationship
and are often used to complement NHSTs. 
These include the estimated coefficients $\hat \beta$ of the regression function \citep{kutner2005applied}, Cohen's d \citep{cohen2013statistical} and Pearson's r \citep{pearson1920notes}, among many others \citep{kirk2007effect,ellis2010essential,fritz2012effect}.   
If the null hypothesis is rejected, one can check whether the effect size is large enough to be practically significant. Although effect sizes can provide valuable information, they are often overlooked in favor of p-values.

To address these limitations, \ourapproach\ integrates both rejection of the null and practical significance into statistical inference. 
 Moreover, if one designs null hypotheses using effect sizes \citep{kruschke2018bayesian}, \ourapproach\ can become fully integrated with them. 
For instance, Cohen's d between two groups can be expressed as $\phi:=\sigma^{-1}(\mu_A-\mu_B)$, where $\sigma$ is the standard deviation of the response variable. The equivalence region  can be stated in terms of $\phi$ as $H_0: |\phi| \leq \Delta$. This hypothesis can then be tested using our approach via the confidence set for $\phi$, $d\pm \text{SE}_d \times t_{n_A+n_B-2}$, where $d = s^{-1}(\bar{x}_1 - \bar{x}_2)$ is an estimate of $\phi$, $t$ is the Student's t-distribution,  $n_A$ and $n_B$ are the sample sizes for each group, and $\text{SE}_d$ is an estimate of the standard error of $d$ \citep{goulet2018review}. 

We note, however, that applying a test based on Cohen's d may not be ideal. As \citet{Lakens2022} puts it,
\begin{quote}
``Setting them [equivalence bounds] in terms of Cohen’s d leads to bias in the statistical test, as the observed standard deviation has to be used to translate the specified Cohen’s d into a raw effect size for the equivalence test (...) [A]s equivalence testing becomes more popular, and fields establish smallest effect sizes of interest, they should do so in raw effect size differences, not in standardized effect size differences."
\end{quote}

\paragraph{Equivalence Range.}
The idea of working with regions of practical equivalence (instead of  point null hypotheses) to address issues 1 and 2
has been presented under many different names---such as good-enough belt \citep{keren1993}, equivalence range \citep{bauer1996unifying}, indifference zone \citep{hobbs2007}, effective null set \citep{gross2014}, and even the ROPE \citep{kruschke2018rejecting} itself---but a definition with sufficient generality to cover for more complex settings is more recent (pragmatic hypothesis, \citet{Esteves2019}). Let $\phi_0$ be the point from which we wish to derive the region of equivalence (that is, we are interested in creating a region of equivalence around $H_0:\phi=\phi_0$), $\Theta$ be the parameter space and $d(\cdot, \cdot)$ be a dissimilarity function. Then, the pragmatic hypothesis will be represented by
$$
    Pg(\phi_0, d, \Delta) := \{\phi \in \Theta: d(\phi_0, \phi) \le \Delta\}.
$$
The setting described in Section \ref{sec:react_simple} corresponds to choosing $d(\phi_0, \phi) = |\phi_0 - \phi|$ and $\phi_0 = 0$. 
There are many heuristics for choosing $\Delta$ \citep{lakens2017equivalence, Lakens2022, Wang2023, lassance2024}. These include:
\begin{itemize}
    \item Relating $\Delta$ to another quantity in the literature about which it is easier to obtain intuition; we use this in \autoref{sec:meta}.
    \item Identifying positive results in the literature that evaluate the same (or a similar) effect as your own study and choosing the smallest $\Delta$ such that \ourapproach{} would lead to accepting the hypothesis in these previous cases. While this strategy may downplay random variability if few studies are used to derive $\Delta$, it acts as a starting point from which researchers can propose changes later. Such an approach is particularly useful in the context of reproducibility studies since rejecting $H_0$ in the new study based on criteria that would have accepted the same hypothesis in the old one can be interpreted as a failure to reproduce the original finding.
    \item Setting $\Delta$ as the smallest change such that patients report
    an improvement from their original conditions. This can routinely be obtained through the use of patient reported outcome measure (PROM) scores. Even when the perception of improvement varies substantially between patients, there is a selection process available that ensures the optimality of the selected $\Delta$ \citep{Wang2023}.
    \item  
    Setting $d(\phi_0,\phi)$ to be a measure of effect size (such as the standardized mean difference between two populations) and taking $\Delta$ to be the smallest effect size that is practically significant. This approach parallels traditional sample size calculations performed through power analysis.
\end{itemize}

When there are no clear strategies for determining $\Delta$, there is available software that can help provide an automated suggestion \citep{makowski2019}. \ourapproach\ can be used within any of these approaches.

\paragraph{Equivalence Tests, TOST and B-values.}
Equivalence tests were originally developed to compare the bioequivalence of two drugs to address issues 1 and partially issue 3. Rather than testing the hypothesis of ``no effect'' ($\phi=0$), the hypothesis is modified to `` practically no effect'' ($|\phi|< \Delta$) \citep{westlake1976symmetrical,schuirmann1987comparison}. Additionally, the null hypothesis is set as the hypothesis of a practical effect, namely $H_0: |\phi|> \Delta$. Thus, under a  Neyman-Pearson interpretation of the test outcomes, rejecting the null hypothesis would lead to the conclusion that there is  absence of practical effect, $|\phi|\leq \Delta$. This makes equivalence tests very useful in various fields including political science \citep{rainey2014}, communication research \citep{weber2012testing}, anthropology \citep{smith2020p}, sensory science \citep{meyners2012equivalence}, psychology \citep{lakens2018equivalence}, and clinical trials \citep{walker2011understanding, wellek2012establishing, friedman2015fundamentals, leung2020non}.

A popular method for conducting equivalence tests  is the Two One-Sided Tests (TOST \citep{schuirmann1987comparison}). The standard TOST procedure for two-sample testing involves (i) constructing a $(1-2\alpha)$-level confidence interval for $\mu_A-\mu_B$, and (ii) rejecting the null hypothesis that  indicates the absence of a practical effect, $H_0: |\mu_A-\mu_B|> \Delta$,  only if the confidence interval falls entirely within the interval $(-\Delta,\Delta)$. While this procedure is similar to \ourapproach\ in that it tests for equivalence using a confidence set, it only yields two possible outcomes: either reject the null hypothesis and conclude the absence of effect, or fail to reject the null hypothesis. 
In fact, TOST corresponds to applying \ourapproach\ taking $C$ to be a $(1-2\alpha)$-level  confidence interval 
and merging the decisions ``accept'' and ``agnostic''.
Thus, \ourapproach\ allows for a more nuanced approach by also permitting acceptance of the hypothesis of a practical effect.
 
 Unlike \ourapproach, TOST procedures require tailoring to address each specific problem:
 utilizing any $(1-2\alpha)$-level confidence set does not necessarily control type I error rates at $\alpha$ \citep{wellek2010testing}. Indeed, \citet{berger1996bioequivalence} show several examples of TOST tests that do not control type I error rates \citep{berger1996bioequivalence}. Some examples of TOST methods include regression \citep{dixon2005statistical, robinson2005regression,campbell2020equivalence,alter2021determining}, parametric and non-parametric paired sample tests \citep{mara2012paired}, and correlation coefficients \citep{counsell2015equivalence}.
\ourapproach, however, controls type I error at $\alpha$ as long as $C(\D)$ is a $(1-\alpha)$-level confidence set (Proposition \ref{proper:control}).
Furthermore, the null and alternative hypotheses can be interchanged without affecting the conclusions of the test: \ourapproach\ is symmetric (Proposition \ref{prop:symmetry}).
\ourapproach{} can also be easily used for hypotheses that involve several parameters (see Section \ref{sec:CAMCOG} for an example), while TOST requires further work (see e.g. \citet{yang2015multigroup} for the development of TOST to compare multiple groups).

Recently, a two-stage equivalence testing procedure that has the advantage of deriving a region of equivalence solely based on data, the empirical equivalence bound (EEB), has been proposed \citep{zhao2022b}.
Let $[L_0, U_0]$ and $[L,U]$ respectively be the symmetrical $(1-\alpha)$ and $(1-2\alpha)$ confidence intervals for $\phi$ and set $B := \max\{|L|,|U|\}$ (the largest deviation from 0 of the interval). The EEB is given by
\begin{equation*}
    EEB_\alpha(\beta|C) = \inf_{b \in [0, \infty]}\{b: F_B(b|C, \phi = 0) \ge \beta\},
\end{equation*}
where $C$ is either $0 \in [L_0, U_0]$ or $0 \notin [L_0, U_0]$, $F_B$ is the cumulative distribution of $B$ and $\beta$ is a fixed probability. Therefore, $[-EEB,EEB]$ is the smallest symmetrical region such that one would reject $\phi = 0$ with probability $\beta$.

This serves as yet another suggestion for deriving $\Delta$ and, once its value is fixed, \ourapproach{} reaches the exact same conclusions as the two-stage testing procedure in \citet{zhao2022b} when using the $100(1-2\alpha)\%$ confidence set for testing.

\paragraph{Non-inferiority and Superiority Tests.}
\ourapproach\ is closer in spirit to tests that combine superiority, non-inferiority and equivalence tests, such as \citet{julious2004sample}  and other variations \citep{tryon2001evaluating,goeman2010three,friedman2015fundamentals,zhao2016considering,lakens2018equivalence,zhao2022b}. While these approaches are highly informative and gaining  in popularity, they are specific to certain tests and hypotheses, and require tailoring to fit each individual problem. Therefore, there is no guarantee that they will control type I error at $\alpha$ \citep{berger1996bioequivalence} 
 for any given problem (indeed, they are not designed to have type I error control globally). In Section \ref{sec:applications}, we provide examples of scenarios where it may be difficult to adapt such an approach, whereas our approach, \ourapproach, remains user-friendly. Moreover, as shown in Proposition \ref{prop:fwer}, \ourapproach\ automatically controls the  Family-Wise Error Rate \citep{lehmann1957theory,wang1999multiple,gupta1981multiple}  (FWER); there is no need to use procedures such as Bonferroni correction to account for multiple testing.


\paragraph{HDI+ROPE, GFBST and S-values.}
The HDI+ROPE (Highest Density Interval+Region of Practical Equivalence) \citep{kruschke2010bayesian, kruschke2018rejecting,keysers2020using} represents a specific instance of \ourapproach. Indeed, its definition (see e.g. \citet[page 291]{kruschke2010bayesian}) directly corresponds to  \ourapproach\ with the confidence set $C$ as the $(1-\alpha)$-level Bayesian Highest Density Credible Interval.
In a similar vein, the GFBST \citep{Stern2017} (Generalized Full Bayesian Significance Test) corresponds to choosing $C$ as the $(1-\alpha)$-level Bayesian Highest Posterior Density Region (HPD), given by
$$C(\D)=\{\theta \in \Theta: f(\theta|\D) \geq t_{1-\alpha}\},$$
where $t_{1-\alpha}$ is chosen so that $\P(\theta \in C(\D)|\D)=1-\alpha$, and $f(\theta|\D)$ is the posterior distribution of $\theta$ given data $\D$.
  New theoretical guarantees regarding the average coverage probabilities of such procedures are provided in Section \ref{subsec:properties} and in the \proofs{}
(Theorem \ref{thm:family-wise-bayes}). Furthermore, the approach presented by \citet{patriota2013classical} can also be viewed as a specific instance of \ourapproach, where the sets $C$ are constructed using s-values.
\vspace{2mm}

\paragraph{Three-way hypothesis tests.}
 Three-way tests provide a more  nuanced approach to hypothesis testing \citep{jones2000sensible,rice2023three} and were suggested \emph{en passant}
by J. Neyman \citep{Neyman1976}, who argued that  ``The phrase `do not reject H' is longish and cumbersome ... (and) should be distinguished from a `three-decision problem' (in
which the) actions are: (a) accept H, (b) reject H, and (c) remain in doubt.''
Neyman however did not develop this approach.
To the best of our knowledge, \citet{hays1963statistics} was the first one to present a formal 
three-way approach to hypothesis testing under a decision-theoretic framework. 

Three-way hypothesis testing has several benefits. It can address concerns about replication and the limited publication of null results \citep{campbell2018conditional} and is essential to the evolution of scientific theories \citep{Esteves2019}.
Recent research has explored the many advantages of three-way tests. For example, \citet{berg2004no} showed that three-way tests can control both types I and II error probabilities in a setting of simple versus simple hypotheses. This is in contrast to two-way tests, where only one error can be controlled. Subsequently, \citet{Coscrato2018} generalized this approach to composite alternative hypotheses. 
\ourapproach\  controls both type I and type II error probabilities to be no more than the same level $\alpha$ (Property \ref{proper:control}).
In addition, \citet{Esteves2016,Stern2017} and \citet{esteves2023logical} showed that
three-way tests can address logical inconsistencies that occur in standard two-way tests \citep{izbicki2012testing,da2015bayesian,Izbicki2015,fossaluza2017coherent}. 
This is one of the arguments made by \citet{kruschke2018rejecting} to justify why HDI+ROPE should be preferred over TOST+NHST.

A three-way testing approach that is very closely related to \ourapproach{} is the PASS-test \citep{gross2014}, which uses a region of equivalence combined with confidence intervals to reach decisions. The main differences between the methods is that \ourapproach{} is more general (it can derive regions of equivalence and tests to combinations of parameters, instead of only uniparametric hypotheses) and 
has a series of useful properties particular to it (see Section \ref{sec:methods}). One of them (logical coherence, Property \ref{proper:coherence}) guarantees that no contradictory conclusions are reached when REACT is applied to multiple tests. The main advantage of using three-way tests such as these is that, since they are based on intervals instead of p-values, they provide greater intuitive appeal  than p-value-based tests \citep{rainey2014}.

\vspace{2mm}

\paragraph{Tests based on posterior probabilities and Bayes Factors.}
 An alternative approach to testing hypotheses is the Bayesian framework. Under this perspective, one typically computes either the posterior probability of the null hypothesis, $\P(H_0|\D)$, or the Bayes Factor, $\P(\D|H_0)/\P(\D|H_1)$, and rejects the null if the values are small \citep{Berger2013}. This approach solves many of the issues associated with NHST. For instance, Bayesian tests can accept the null hypothesis \citep{rouder2009bayesian,kelter2020bayesian}.
 
 Furthermore, within a Bayesian framework, it is customary to formulate null hypotheses that are not precise \citep{Edwards1963, good2009, berger1985}, thereby circumventing the challenge of exclusively dealing with implausible hypotheses \citep{schervish2012theory}. This, however, remains mostly restricted to uniparametric hypotheses, such as $H_0: |\phi - \phi_0| \in [\delta_L, \delta_U]$, where $\delta_L \le 0 \le \delta_U$ are known beforehand \citep{hobbs2007, kruschke2018rejecting}. The more popular alternative remains to assign probability masses to precise hypotheses \citep{jeffreys1961, kass1993, migon2014}, done mostly due to practical reasons instead of a true representation of the researcher's beliefs.
 
 Moreover, tests based both on posterior probabilities and on Bayes factors are not fully logically coherent in the sense described in Section  \ref{sec:methods} \citep{lavine1999bayes,Izbicki2015}. \ourapproach\ can be used within a Bayesian context, although it is not based on computing posterior probabilities of the null hypothesis (see Section \ref{sec:methods}).

\vspace{2mm}

\section{\ourapproach\ and its properties}
\label{sec:methods}

Next, we introduce the general version of \ourapproach. Our goal is to test
one or more hypotheses regarding
parameters $\theta$,
which assume values in $\Theta$,
a subset of $\Re^d$.
In this context, a null hypothesis is
of the form:
$H_0: \theta \in \Tz$,
where $\Tz$ is a subset of $\Theta$ 
and the alternative hypothesis, $H_1$,
is $H_1: \theta \in \Tzc$.
Whenever there is no ambiguity,
$\Tz$ is used instead of $H_0$.
 All definitions and 
proofs of the properties stated in this section  are
found in  the \proofs{}.

In order to test $H_0$,
\ourapproach \ requires one
to construct  a region of
 values for $\theta$.
One possible region is obtained by
using data, $\D$,
to construct a confidence region
for $\theta$, $C(\D)$.
If a frequentist approach is used, then
one requires that
\begin{align*}
 \P_\theta\left(
 \theta \in C(\D)\right)
 =1-\alpha, \ \text{for all } \theta \in \Theta,
\end{align*}
where $1-\alpha$ is the 
confidence level of $C(\D)$. Notice that the randomness of this probability is on the data $\D$; once the data is observed there is no randomness left.
For a Bayesian implementation of 
\ourapproach \ one could use
a $(1-\alpha)$-level credible region \citep{Berger2013}, that is, a region such that
$$\P\left(
 \theta \in C(\D)|\D \right)
 =1-\alpha,$$
where the probability's randomness stems from the posterior distribution of $\theta$ given the data $\D$.

\ourapproach\ tests $H_0$
using the following rule
(illustrated in Figure \ref{fig::region}):
  \begin{equation*}
    \text{Decision}=
    \begin{cases}
      \text{\textbf{Accept} $H_0$} \ \text{if $C(\D) \subseteq \Theta_0$,} \\
      \text{\textbf{Reject} $H_0$} \ \text{if $C(\D) \subseteq \Theta^c_0$,} \\
      \text{\textbf{Remain Agnostic}} \ \text{if $C(\D)$ intersects with both $\Theta_0$ and $\Theta^c_0$}. \\
    \end{cases}
  \end{equation*}

\begin{figure}[!http]
 \centering
 \includegraphics[width = \textwidth]{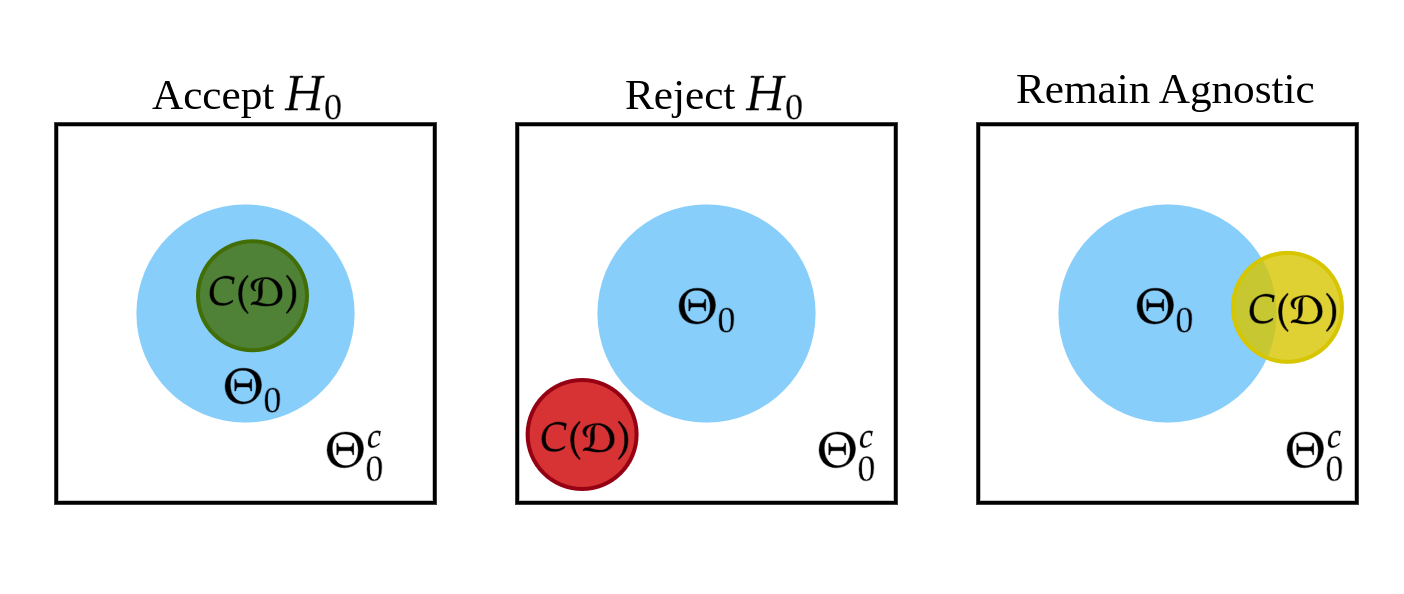}
 \caption{Illustration of \ourapproach \ to test a generic hypothesis  $H_0: \theta \in \Tz$. $C(\D)$ is a confidence set built using  data $\D$; the blue region represents the null hypothesis. Whenever $C(\D)$ is a subset of $\Tz$
 (left), all plausible values for $\theta$ 
 are in $\Tz$. Hence, $\Tz$ is accepted.
 Whenever $C(\D)$ lies outside of $\Tz$
 (middle), no plausible value for $\theta$
 is in $\Tz$. Therefore, $\Tz$ is rejected.
 Finally, if some points of $C(\D)$ are
 inside $\Tz$ and others are outside,
 then some plausible points for $\theta$
 are compatible with $\Tz$ and
 others are incompatible. In this situation,
 $\Tz$ remains undecided.}
 \label{fig::region}
\end{figure}

In what follows,  \fourapproach \ denotes a \ourapproach \ procedure based on a $(1-\alpha)$-level frequentist region, while  \bourapproach \ denotes a \ourapproach \ procedure based on a $(1-\alpha)$-level Bayesian credible region.  We use \ourapproach\ to refer to scenarios where the property in question holds for any version of $C(\D)$.

\fourapproach \ can
also be obtained using standard 
p-values for point null hypotheses:

\begin{property} \textbf{[Computation of \fourapproach\ using p-values]}
\label{proper:pvals}
    Let $\text{p-val}_{\D}(\theta_0)$ be a  p-value for the hypothesis $H_0:\theta=\theta_0$.
    The procedure
    \begin{equation*}
    \begin{cases}
      \text{\textbf{Accept} $H_0$} \ \text{if $\max_{\theta \in \Theta_0^c} \text{p-val}_{\D}(\theta) \leq \alpha$,} \\
      \text{\textbf{Reject} $H_0$} \ \text{if $\max_{\theta \in \Theta_0}\text{p-val}_{\D}(\theta) \leq \alpha$,} \\
      \text{\textbf{Remain Agnostic}} \ \text{otherwise.} \\
    \end{cases}
  \end{equation*}
  \vspace{6mm}
   is equivalent to
  calculating \fourapproach \ through
  the (1-$\alpha$)-level confidence set:
  $C(\D):=\left\{\theta \in \Theta: \text{p-val}_{\D}(\theta)  > \alpha \right\}.$
\end{property}

Oftentimes, the parameter space can be decomposed as $\Theta=\Phi \times \Psi$, where $\phi \in \Phi$ are parameters that will be tested, and $\psi \in \Psi$ are nuisance parameters. Theorem \ref{thm:nuisance} in  the \proofs{}  shows that, as long as a confidence (or credible) set for the parameters of interest $\phi$ is available, \ourapproach\ can easily handle nuisance parameters. Moreover, Theorem \ref{thm:pvals_nuisance} shows how \fourapproach \ can be computed using p-values in this setting, and Theorem \ref{thm:evals} shows its Bayesian counterpart.

In many problems, one is interested in testing different hypotheses, each one dealing with a different set of parameters.
 For example,
in a linear regression model where $\beta_0+\beta_1x_1+\ldots+\beta_n x_d$, one is often interested in testing $H_0: \beta_i=0$ for each $i$ at a time. 
\ourapproach\ for such hypotheses can be obtained at a low computational cost 
as long as a confidence set for all parameters is available:

\begin{property} \textbf{[Easy computation of \ourapproach\ for low-dimensional hypotheses]}
\label{prop:projection-equivalence}
    Assume that the parameter space can be decomposed as $\Theta=\Phi \times \Psi$ and let $H_0: \phi \in \Phi_0$, $\Phi_0 \subset \Phi$, be the hypothesis of interest.
    Also, let $C(\mathcal{D})$ be a region estimator on $\Theta$ and 
    $C_\Phi(\mathcal{D})$ be the projection of the confidence region $C$ on the parameters $\phi$, that is, $$C_\Phi(\mathcal{D}) := \{\phi \in \Phi: \exists \psi \in \Psi \text{ such that } (\phi, \psi) \in C(\mathcal{D})\}.$$ Then, \ourapproach \ can be computed via
    \begin{equation*}
        \text{Decision}=
        \begin{cases}
            \text{\textbf{Accept} $H_0$} \ \text{if $C_\Phi(\D) \subseteq \Phi_0$,} \\
            \text{\textbf{Reject} $H_0$} \ \text{if $C_\Phi(\D) \subseteq \Phi^c_0$,} \\
            \text{\textbf{Remain Agnostic}} \ \text{otherwise.} \\
        \end{cases}
  \end{equation*}
  In words, \ourapproach\ only requires evaluating whether the projection of the confidence set belongs to the null hypothesis.
\end{property}

In the context of multiple testing, Theorem \ref{thm:nuisance} in the Appendix ensures that one only needs to build a confidence region for the parameters of interest, while Property \ref{prop:projection-equivalence} guarantees that using projections of such region will not affect the conclusions or harm the properties of \ourapproach{}. If there are nuisance parameters, using Property \ref{prop:projection-equivalence} directly on $C_\Theta(\mathcal{D})$ is not advised, as projections often lead to tests with a lower level of significance than the nominal one, potentially having lower statistical power.

The importance of Property \ref{prop:projection-equivalence} is illustrated in
Section \ref{sec:CAMCOG}, where we show how to implement \ourapproach \ for multiple pairwise comparisons.

\ourapproach \ has
desirable statistical and 
logical properties which we explore in what follows.

\subsection{Statistical properties}
\label{subsec:properties}

Whenever $C(\D)$ is a confidence region,
\ourapproach \ has
desirable frequentist statistical properties.
For instance, \fourapproach \
controls not only the type I error rate  but
also the type II error rate:

\begin{property} \textbf{[Type I and Type II error rate control]}
\label{proper:control}
\fourapproach \ has  Type I and Type II errors rates of at most $\alpha$.
\end{property}

Although the control of type I error rate is
a standard criterion for
hypothesis tests at point nulls,
such control is seldom obtained
for the type II error rates.
\fourapproach \ obtains the latter by
remaining undecided when
the data is not 
sufficiently informative about
the null hypothesis.

Next, we show that when
  conducting multiple hypothesis tests, \fourapproach \  controls a strong version of family-wise error rate (FWER), in the sense that it controls the probability of making at least one Type I or Type II error.
This is more stringent than the standard notion of FWER, which typically focuses only on controlling Type I errors.

\begin{property}\textbf{[Family-wise error rate (FWER) control]} 
 \label{prop:fwer}
 If several hypotheses are tested, \fourapproach \ controls a generalized version of the  Family-Wise Error Rate (FWER \citep{tukey1953problem}) over 
 multiple hypothesis tests. 
 That is, for every $\theta \in \Theta$,
 \begin{align*}
  \begin{cases} 
   \text{FWER}_{I} :=
   \P_\theta(\text{At least one correct hypothesis is rejected}) \leq \alpha, 
   & \\ \text{FWER}_{II} :=
   \P_\theta(\text{At least one false hypothesis is accepted}) 
   \leq \alpha, & \\
   \P_\theta(\text{At least one correct
   hypothesis is rejected or
   one false hypothesis is accepted})
   \leq \alpha. &
  \end{cases}
 \end{align*}
\end{property}
Remarkably, this control is achieved  without necessitating the correction of significance levels for multiple testing.
See Theorem \ref{thm:family-wise-bayes}, in the \proofs{}, for a Bayesian counterpart of this result using \bourapproach.

Next, we show that
\ourapproach \ is  typically consistent in the sense that,  as the sample size increases,
\ourapproach \ 
rejects each hypothesis that is false and
accepts each one that is true.
Thus, \ourapproach\ does not have 
the issue in which the null hypothesis
is rejected as the sample size increases.

\begin{property} \textbf{[Consistency as n increases]}
 \label{proper:consistency}
 If the confidence set $C(\mathcal{D})$ converges to the true value of the parameter (fixed), $\theta$, (see Definition \ref{def:convergeSet}),   and $\theta$ does not lie on the boundary of the null hypothesis $\Theta_0$, then,
 as the sample size increases, the probability that 
 \ourapproach\ accepts $H_0$ 
 when it is true goes to one, and the probability that it rejects  $H_0$ goes to zero. This property is illustrated in Figure \ref{fig:pragmatic_camcog}.
\end{property}
 Besides the above properties,
one might also be interested in
the ``best'' confidence set to
be used in step 2 of \ourapproach .
In the following, we discuss
a first development of theory
for answering this question.

As a first challenge, one must answer
what is a ``best'' test in the context of \ourapproach .
Since \fourapproach \ controls both type I and type II errors, a possible generalization of standard theory is
to seek a test that minimizes the type III error, that is,
the probability of remaining undecided.
However, it can be hard to find such a test that
minimizes this error uniformly among
all parameter values and hypotheses that can be tested.

The following property provides a first answer when
one restricts attention to interval confidence sets and
unilateral and bilateral hypotheses. A formal mathematical development is presented in Appendix \ref{sec::addTheorems}.
 \begin{property}
  Consider that one wishes to test solely
  unilateral and bilateral hypotheses regarding
  the parameters of interest. Also, consider that
  there exists a standard uniformly most powerful (UMPU) test for each point hypothesis.
  If \fourapproach \ uses the confidence set obtained
  from inverting the UMPU tests, then the probability that
  it remains undecided is uniformly lower than that of
  every region test based on an interval confidence set that has the same type I and type II error rates.
 \end{property}

Finally, under a Bayesian perspective, accepted hypotheses have high posterior probability, while rejected hypotheses have low posterior probability:

\begin{property}\textbf{[Posterior probability of the null hypothesis]} 
\label{prop:bayes_prob}
If  \bourapproach\ accepts $H_0$, then the posterior probability of $H_0$ is larger than $1-\alpha$. Moreover, if 
\bourapproach\ rejects $H_0$, then the posterior probability of $H_0$ is smaller than $\alpha$.
\end{property}

Notice however that the reverse is not true: \bourapproach\ may not accept a hypothesis with large posterior probability. One example is when \bourapproach\ remains agnostic because at least one element of $H_0$ does not intersect with the credible region, due to it residing in an area with very small posterior probability.

\subsection{Logical properties}

Besides having good statistical properties,
\ourapproach \ is also coherent from a
logical perspective:

\begin{property} \textbf{[Logical coherence]} 
 \label{proper:coherence} 
 \ourapproach\ is logically coherent in the sense described in Definition \ref{def:coherence}. 
\end{property}

That is, if one treats
accepted hypotheses as true and 
rejected hypotheses as false, then
no logical contradiction is obtained.
For example:
\begin{itemize}
 \item If $H_0: |\mu_A - \mu_B| < \Delta$ 
 is accepted, then so is
 $H_0: |\mu_A - \mu_B| \leq \Delta$ .
 \item If both $H_0: \mu_A - \mu_B \leq 1$ 
 and $H_0: \mu_A - \mu_B \geq 0$ are 
 accepted, then
 $H_0: 0 \leq \mu_A - \mu_B \leq 1$ will be as well.
 \item  
 Consider pairwise null hypotheses of the form $H_0^{i,j}:|\mu_i-\mu_j| \leq \Delta$, where $\mu_i$'s are parameters of the model, and a global null hypothesis
$H_0:\max_{i,j} \{ |\mu_i-\mu_j| \} \leq \Delta$.
  If all  $H_0^{i,j}$'s are accepted, so is $H_0$. 
  Similarly, if at least one  $H_0^{i,j}$ is rejected, so is  $H_0$.
 \end{itemize}

Most statistical procedures do not satisfy this property, which makes their outcomes hard to interpret \citep{Izbicki2015}. The reader is referred to \citet{hansen2023coherent} for other types of coherence.

A consequence of Property \ref{proper:coherence} is that,
unlike standard tests, where it is crucial to determine which hypothesis is labeled as the ``null'' and which one is labeled as the ``alternative,'' \ourapproach\ is indifferent to this choice.
In other words,
 the choice of whether the null hypothesis is $H_0: |\mu_A - \mu_B| \leq \Delta$ or $H_0: |\mu_A - \mu_B| > \Delta$ 
     does not materially affect the conclusions. If the former is rejected, it means that the latter is accepted and vice versa:

\begin{property} \textbf{[No need to specify which one is the null hypothesis]}
\label{prop:symmetry}
The null and the alternative hypotheses can be exchanged without materially affecting conclusions, and therefore there is no need to distinguish the labels ``null'' and ``alternative'' hypothesis.
\end{property}

\section{Applications}
\label{sec:applications}

\subsection{Cambridge Cognition Examination}
\label{sec:CAMCOG}
The Cambridge Cognition Examination \citep{Roth1986} (CAMCOG) is a widely-used questionnaire for measuring the extent of dementia and assessing the level of cognitive impairment. We analyze data from \citet{Cecato2016} to investigate whether CAMCOG scores can distinguish between three groups of patients: control (CG), mild cognitive impairment (MCI), and Alzheimer's disease (AD).
We assume that the score of the $k$-th patient in group $i$ is given by $Y_{i,k} = \mu_i + \epsilon_{i,k}$, where $\mu_i$ is the population average for group $i$ and $\epsilon_{i,k}$ is a Gaussian random variable with mean 0 and variance $\sigma_{i}^2$.
Our analysis focuses on testing the hypothesis
\begin{equation}
    \label{eq:hyp_pair}
    H_0: |\mu_i - \mu_j| \leq  \Delta
\end{equation}
for all pairs of groups. 

An initial approach to address this problem involves constructing confidence sets for individual parameters, specifically $\phi:=|\mu_i - \mu_j|$. The process of building these confidence sets is illustrated in \autoref{fig:pragmatic_camcog}, where we explore how the sample size influences the test results. The confidence intervals are constructed using the standard confidence set for the difference between two means, assuming a Gaussian distribution \citep{diez2012openintro}, where each difference may have a different variance.

The value of $\Delta$ was determined using the classification dissimilarity introduced by \citet{Esteves2019}:
When the pragmatic hypothesis does
not hold, there exists a classifier based on the CAMCOG score which is able to discriminate the groups under comparison with an accuracy (that is, a probability of predicting the correct class) of at least 80\%. 

Although the previous approach is sensible, it is technically not a fully-\ourapproach\ approach: \ourapproach\ requires a single confidence set $C(\mathcal{D})$ to be used to test all hypotheses. A full-\ourapproach\ can be obtained by using the following $100(1-\alpha)\%$  confidence region 
for the vector of means $\boldsymbol{\mu} = (\mu_{CG}, \mu_{MCI}, \mu_{AD})$ \citep{johnson2002,meeker1995}: 
\begin{equation}
    \label{eq:confidence_ellipsoid}
    (\overline{\boldsymbol{x}} - \boldsymbol{\mu})'\hat{\Sigma}_{\overline{\boldsymbol{x}}}^{-1}(\overline{\boldsymbol{x}} - \boldsymbol{\mu}) \le \chi^{2}_{p}(\alpha),
\end{equation}
where $\overline{\boldsymbol{x}}$ is the sample mean, $\hat{\Sigma}_{\overline{\boldsymbol{x}}}$ is the estimate of the covariance matrix of $\overline{\boldsymbol{X}}$, $p$ is the size of $\boldsymbol{\mu}$ and $\chi^2_{\cdot}(\cdot)$ is the quantile function of the $\chi^2$-distribution. Moreover, $\hat{\Sigma}_{\overline{\boldsymbol{x}}}$ is a diagonal matrix in this case because the data on each subject was collected independently.

Once the confidence region has been derived, any hypothesis can be tested through the \ourapproach{} framework by checking if the confidence region is contained in $H_0$. While evaluating if $C(\mathcal{D})$ is contained in $H_0$ is feasible, Property \ref{prop:projection-equivalence} ensures that the same conclusions can be reached by using the projection (in this case, the ellipses of \autoref{fig:camcog_react_ellipse}) of $C(\D)$ on $(\mu_i,\mu_j)$, requiring fewer calculations than using $C(\mathcal{D})$ and allowing for graphical visualization of the results in two dimensions. By projection, we mean the set
$$
  \left\{  (\mu_i, \mu_j) \in \mathbb{R}^2:  \exists \mu_k \in \mathbb{R} \text{ with } (\mu_i, \mu_j, \mu_k) \in C(\D) \right\}.
$$

In \autoref{fig:camcog_react_ellipse}, the blue area represents the region of equivalence and each ellipse is a confidence region whose color implies the result of the test (yellow for remaining undecided, green for acceptance and red for rejection). While there is not enough evidence to conclude if MCI patients can be discriminated from those with AD (absence of evidence of a practical effect), the CAMCOG is unable to differentiate between MCI and the control group (evidence of absence of a practical effect). This is not the case when comparing AD and CG groups, supporting the idea that the CAMCOG is able to differentiate between them (evidence of presence).
Finally, Property \ref{proper:coherence} 
and the fact that $|\mu_{AD} - \mu_{CG}|\ \leq \Delta$ was rejected implies that the hypothesis of no relevant difference between all  groups ($H_0: \max \left\{|\mu_{MCI} - \mu_{AD}|, |\mu_{MCI} - \mu_{CG}|, |\mu_{AD} - \mu_{CG}| \right\} \leq \Delta$) is also rejected. A  Bayesian version of this analysis is presented in \autoref{fig:camcog_bayes} and leads to the same conclusion.


\begin{figure}[ht]
\centering   
\includegraphics[width = .9\textwidth]{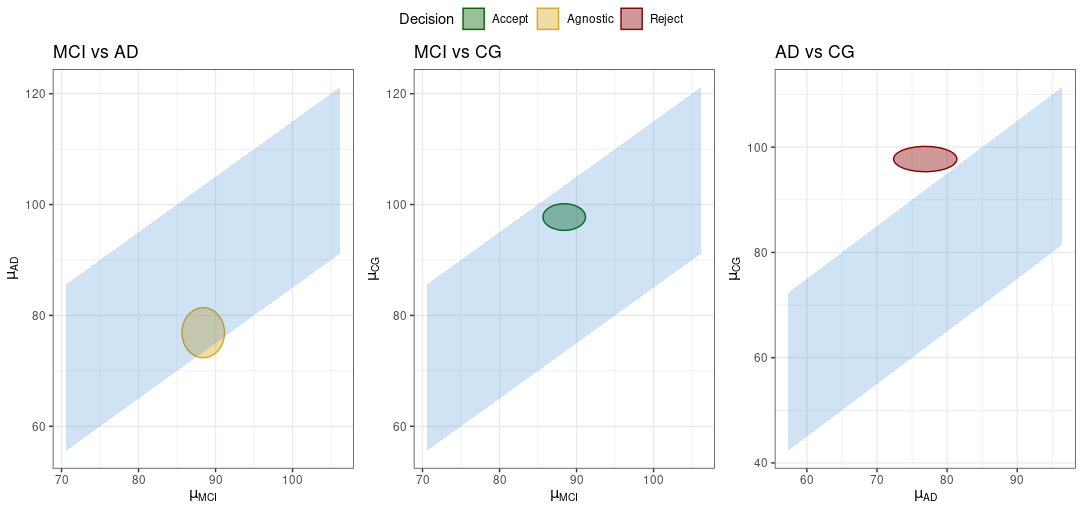}
    \caption{Pairwise group comparisons given by \ourapproach\ for the Cambridge Cognition Examination example. The ellipses are projections of the multivariate confidence set of Equation \ref{eq:confidence_ellipsoid}  on each pair of parameters $(\mu_i,\mu_j)$, while the blue regions represent the null hypotheses $|\mu_i-\mu_j| \leq \Delta$ for the various groups. Property \ref{prop:projection-equivalence} guarantees it is enough to evaluate these projections to test these hypotheses.}
    \label{fig:camcog_react_ellipse}
\end{figure}

\vspace{2mm}

\subsection{Meta-analysis}
\label{sec:meta}
This application is an adaptation of the meta-analysis originally performed by \citet{Teixeira2022} and uses the Number Necessary to Treat (NNT) as a means to obtain the desired region of equivalence for the risk differences. The NNT is the expectation of the least people required for the treatment to present better results than the control and thus is commonly used in the literature as an indicator of clinical significance and its value can be easily provided \citep{Citrome2011}.

The original study evaluates patients' adherence to tobacco cessation protocols comparing traditional approaches combined with computer-assisted health technologies to traditional approaches themselves. The treatments are compared by two main outcomes: the adherence to the follow-up period of treatments without any drug (labeled as ``Follow-up'') and the adherence to the pharmacotherapy, on studies that used any drug besides nicotine replacement (labeled ``Pharmacotherapy''). Since the study evaluates the adherence to treatment, not the efficacy of the treatment itself, the NNT of interest should be at least smaller than the best treatments recommended by the literature \citep{Cahill2016, Stead2012}, which leads to $\text{NNT} < 11$.

\autoref{fig:nnt_forestplot} presents the forest plot for both treatments, with the blue area representing the region of equivalence substituting the hypothesis that the risk difference is less or equal to 0 (treatment no better than control), the X-axis providing the confidence intervals obtained from each study and the ``pooled'' label representing the interval that results from aggregating all studies. The aggregation was done either through \textbf{(i)} a Random Effects Model (REM) with random intercepts \citep{bakbergenuly2018}---which mirrors the pooling strategy of the original study---on the risk difference of each study or \textbf{(ii)} a fixed effects model, which assumes that the samples from all studies come from exactly the same population.
Unlike \citet{Teixeira2022}, in this case, the difference between the probabilities of success of treatment and control was used instead of the relative risk. This was due to the fact that the NNT is the inverse of such a difference, so one can be directly translated into the other. Since the interest is to evaluate clinical significance, the region of equivalence chosen was $[-1, 1/6]$, meaning that the NNT of the study has to be less than 6 for its results to be practically significant (treatment better than control). A Bayesian version of this analysis is presented in \autoref{fig:forestplot_bayes} and leads to the same conclusion.

\begin{figure}[h!]
    \centering
    \includegraphics[width = .85\textwidth]{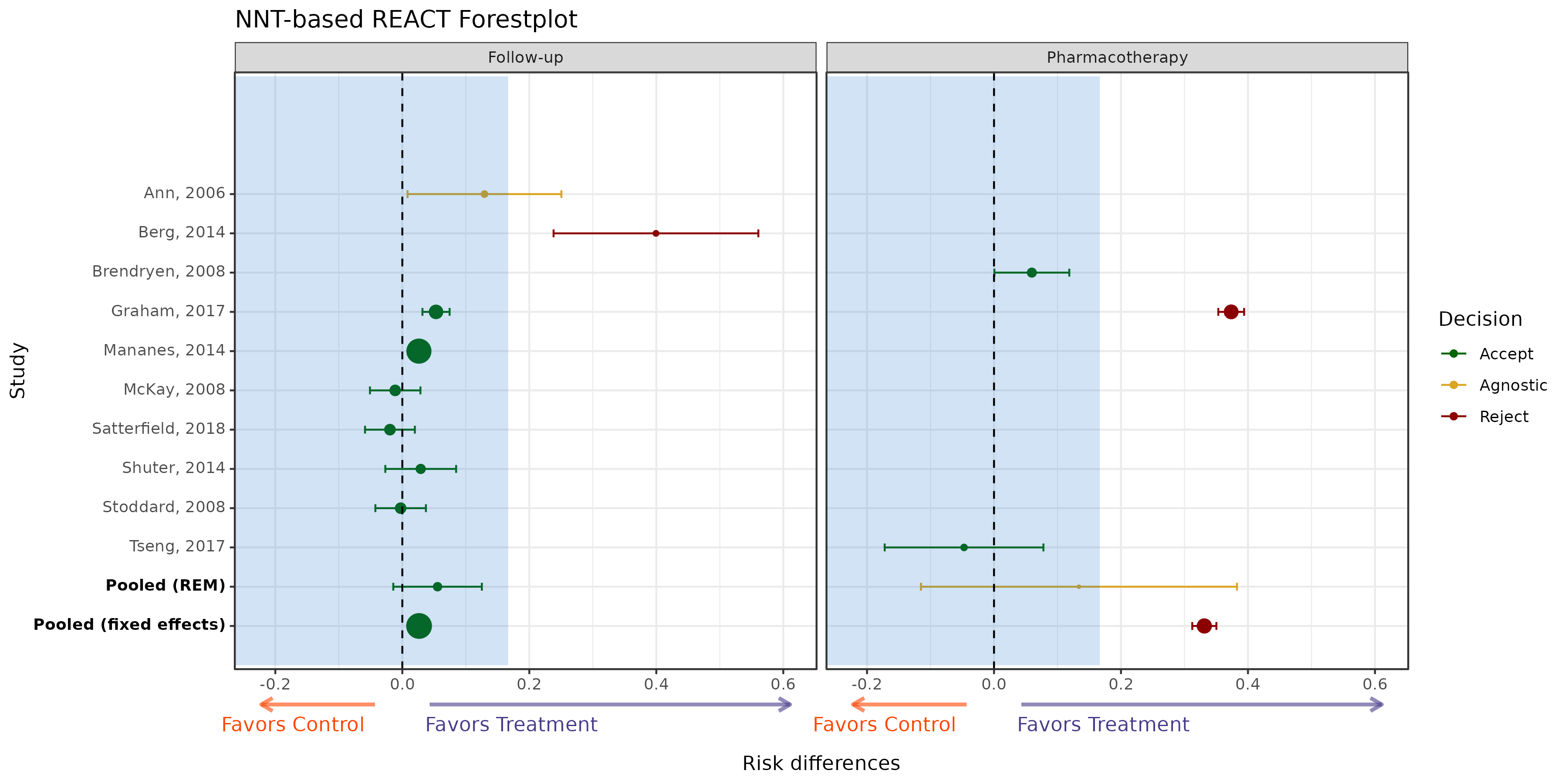}
    \caption{Forest plots for the meta-analyses described in Section \ref{sec:meta}, along with the decisions made by \ourapproach. The point estimate sizes are proportional to the inverse of the variances of each study. Our approach indicates that treatment is no better than control in the follow-up studies (that is, the null hypothesis is accepted), while it remains agnostic or is rejected in the pharmacotherapy studies, depending on the choice of the pooling method. The null region of equivalence was obtained using the Number Necessary to Treat (NNT).}
    \label{fig:nnt_forestplot}
\end{figure}

For the follow-up outcome studies, most of the intervals are contained in the region of equivalence, leading to the acceptance of the hypothesis that treatment is no better than control. In particular, the pooled results leads to the conclusion that the treatment is not worth pursuing.

On the other hand,
the pharmacotherapy studies are scarcer and do not agree with each other. 
This discrepancy results in a poor estimate for the REM pooled interval, leading to the decision of remaining agnostic, i.e., more studies need to be conducted to then reach an assertive conclusion. The fixed effects pooled interval leads to rejection, as the study that rejected the hypothesis is considerably more precise than the others. Therefore, the decision to remain undecided or reject the hypothesis depends on which pooling strategy the researcher finds more appealing for this case.

\section{Implications of the \ourapproach{} Method for Machine Learning and AI}
\label{sec:ml}

While this work has focused on formulations more adherent to Statistics, \ourapproach{} can also be extended to contexts akin to Machine Learning and AI.
By addressing   limitations of NHST, \ourapproach{} provides a principled way to evaluate models, handle multiple comparisons, and quantify uncertainty in automated decision-making systems. Here, we present a few of its advantages in this context.

\textbf{Sample Size Sensitivity in Large-Scale Machine Learning} In modern machine learning applications, datasets are often massive, which can lead traditional null hypothesis significance testing (NHST) to flag even negligible differences as statistically significant. This sensitivity to sample size is especially problematic when the goal is to assess whether a difference is practically relevant.
\ourapproach{} addresses this issue by testing \textbf{pragmatic} rather than \textbf{point-null} hypotheses. Instead of evaluating whether two models have exactly the same performance (e.g., $H_0: \mu_1 = \mu_2$), it tests whether the difference is within a region of equivalence (e.g., $H_0: |\mu_1 - \mu_2| \leq \Delta$), where $\Delta$ reflects a threshold of practical significance. This prevents trivial differences from leading to misleading conclusions and promotes more meaningful comparisons.
The benefits of \ourapproach{} are especially evident in supervised learning tasks, where $\mu_i$ could represent a performance metric such as AUC, F1 score, or accuracy. In A/B testing, it might represent the conversion rate of a strategy, while in reinforcement learning, it could be the expected value of a policy. By focusing on whether an observed effect is large enough to matter, rather than merely detectable, \ourapproach{} provides a robust framework for inference that remains consistent and interpretable even as sample sizes grow \citep{biecek2024rashomon, friedman2001elements}. 

\textbf{Three-Way Decision Framework in Model Selection and Ensembles} Model selection in machine learning is often based on significance testing, information criteria, or validation set performance. However, these methods do not account for practical equivalence. \ourapproach{}’s three-way decision framework (accept, reject, remain agnostic) offers a more nuanced alternative: when models are practically indistinguishable, selection can prioritize other criteria like interpretability or computational efficiency; when one model outperforms another by a meaningful margin, the better model is preferred \citep{biecek2024performance, domingos2012few}; and when results are inconclusive, further data collection or ensemble approaches may be warranted. In ensemble learning, \ourapproach{} can improve model diversity by selecting models that are practically different rather than merely statistically distinct, weighting ensemble components based on their unique predictive contributions, and organizing models hierarchically by clustering those that are practically equivalent. The logical coherence of \ourapproach{} ensures that these selection processes remain internally consistent, avoiding contradictions in multi-hypothesis settings.

\textbf{Error Rate Control and AutoML Reliability} Automated Machine Learning (AutoML) systems make numerous decisions regarding feature selection, hyperparameter tuning, and model architecture. NHST-based methods typically focus only on controlling Type I errors, often at the expense of Type II errors. \ourapproach{} improves AutoML reliability by controlling both Type I and Type II errors at level $\alpha$, preventing the inclusion of irrelevant features (Type I errors) and the exclusion of relevant ones (Type II errors), avoiding incorrect conclusions about model superiority when comparing architectures, and maintaining statistical power despite multiple testing corrections. By allowing an “agnostic” decision, \ourapproach{} provides AutoML systems with a principled way to defer decisions until more data is available, reducing overconfidence in uncertain cases \citep{automl2022survey}.

\textbf{Fairness and Bias in Machine Learning} Fairness evaluations in AI often rely on NHST to compare model performance across demographic groups. However, NHST-based fairness testing has notable limitations: large datasets make even small, practically irrelevant differences statistically significant; failing to reject a null hypothesis of “no difference” does not imply fairness; and multiple comparisons across demographic categories necessitate stringent corrections, reducing statistical power. \ourapproach{} addresses these challenges by defining fairness through practical equivalence ($H_0: |\mu_A - \mu_B| \leq \Delta$), where $\Delta$ represents an acceptable level of disparity; allowing three-way decisions, i.e., accepting fairness when evidence supports equivalence, rejecting fairness when disparities exceed meaningful thresholds, and remaining agnostic when evidence is inconclusive; handling multiple demographic comparisons with built-in FWER control; and ensuring logical consistency, preventing contradictory fairness conclusions \citep{mitchell2021ai, mehrabi2021survey}. By shifting the focus from statistical significance to practical significance, \ourapproach{} enables more interpretable and actionable fairness assessments in AI.

\textbf{Technical Challenges and Future Directions.} While \ourapproach{} offers clear benefits, its application to machine learning settings presents notable technical challenges. Many relevant performance metrics—such as AUC and F1 score—are aggregate, making the construction of valid confidence sets difficult. Existing approaches often rely on asymptotic approximations or resampling methods, which may not yield accurate coverage, especially in small-sample scenarios.  Moreover, in such cases with small samples, the resulting sets tend to be overly wide, leading frequently to agnostic outcomes.
 Another difficulty lies in specifying a meaningful threshold $\Delta$ for practical equivalence. For metrics like the F1 score, this choice is not straightforward and typically requires strong domain-specific knowledge. Furthermore, $\Delta$ may not scale linearly across different tasks or metrics. This motivates the development of more general dissimilarity functions that flexibly capture meaningful performance differences while being easier to calibrate.
These challenges point to important directions for future research: designing computationally efficient procedures for constructing confidence sets around interpretable, task-specific metrics, and developing principled methods for choosing $\Delta$ in practice.

\section{Final Remarks}
\label{sec:final}

We have shown the effectiveness of \ourapproach\ in overcoming numerous difficulties
with traditional NHSTs. A noteworthy aspect of \ourapproach\ is its ability to differentiate between ``evidence of absence'' and ``absence of evidence'' of a practical effect,  a crucial factor for conducting meta-analyses and comparing effects among different groups. Also, \ourapproach\ does not lead to an automatic rejection when the sample size is large.

We have argued that \ourapproach{}, building upon equivalence tests and three-way decision procedures, possesses numerous advantages over other approaches aimed at resolving NHST-related challenges.
Once the null hypothesis of practical interest is specified, \ourapproach\ only requires a confidence region to reach a decision, and thus
researchers from diverse domains can readily apply our approach.
Moreover, \ourapproach\ seamlessly integrates with confidence sets, which are widely regarded as more informative than hypothesis tests and p-values \citep{wasserstein2016asa}. If setting an equivalence region is feasible, this integration results in a framework that surpasses traditional NHST, enhancing the interpretability of statistical inferences.
Finally, \ourapproach\ can be used both within the frequentist and Bayesian frameworks.

Defining the threshold associated to the pragmatic region,  $\Delta$, is a delicate and domain-specific task. Poor choices of $\Delta$ may compromise the practical interpretation of results. To address this, we  recommend conducting sensitivity analyses over a range of $\Delta$ values to assess the robustness of conclusions. This anchors the choice of $\Delta$ in expert judgment and its observed consequences. In particular, it can be helpful to identify the smallest and largest $\Delta$ values for which the decision is non-agnostic. Moreover,
  the quality of \ourapproach's conclusions critically depends on the properties of the confidence or credibility set. If $C$ is constructed under violated assumptions (such as normality or independence), or if it is overly wide due to low power, the resulting decisions may be flawed or agnostic. In this sense, \ourapproach\ inherits the limitations of the underlying inferential machinery used to build $C$. 
  Notice however that although \ourapproach\ does not eliminate low power issues, it surfaces them transparently, avoiding misleading dichotomous conclusions.

Our examples and discussion focus on parametric models, where confidence intervals for means or differences are standard. Extending REACT to non-parametric or semi-parametric models presents additional challenges, particularly in constructing appropriate confidence/credibility regions and the associated computational costs. For the former, several strategies have been proposed to build regions with good properties \citep{genovese2005, robins2006, davies2009nonparametric, park2023}, especially in the Bayesian context, such as in \citet[][Section 6.3]{phadia2016} and \citet{lassance2024}. As for the latter, the computational cost can be circumvented when the delta method is available, but for more complex parameters using bootstrap or MCMC procedures might be required, which may be unreliable or too slow in certain contexts. In future work, we will adapt \ourapproach \ to these settings, accounting for such setbacks.



Adopting \ourapproach\ may contribute to increased confidence in scientific findings by making the decision-making process in hypothesis testing more transparent and by clearly presenting outcomes, including those cases where evidence is inconclusive.
This is particularly pertinent in the realms of meta-science and replication studies, which hold significant importance in the current scientific landscape.

To support the practical implementation of \ourapproach, we have developed an R package (available at \REACTlink{}) that simplifies its application for common models. This user-friendly package empowers researchers to efficiently implement our approach, fostering broader adoption and advancing scientific investigations. Embracing such tools and methodologies can contribute to more rigorous and reliable research practices, benefitting the scientific community as a whole.


\subsection*{ACKNOWLEDGEMENTS}

We thank Luís Gustavo Esteves, Julio Michael Stern and Victor Coscrato for the fruitful discussion about \ourapproach.

\subsection*{Author contributions}

Rafael Izbicki was responsible for conceiving this paper, developing the methods and theory, and writing the manuscript.
Luben M. C. Cabezas was responsible for developing the methods and theory, developing computer code, analyzing and interpreting the results of both applications, and reviewing the manuscript.
Fernando Colugnati was responsible for the meta-analysis case study data and interpretation, writing the manuscript, and review.
Rodrigo F. L. Lassance was responsible for developing the methods and theory, developing computer code, analyzing and interpreting the results of both applications, and writing the manuscript.
Altay L. de Souza was responsible for collaborating on the study cases, writing, and reviewing the manuscript.
Rafael Stern was responsible for conceiving this paper, developing the methods and theory, and writing the manuscript. 

\subsection*{Financial disclosure}

None reported.

\subsection*{Conflict of interest}

The authors declare no potential conflict of interests.





\appendix

\section{Additional Figures and Experiments}
\label{sec:add_figures}

\begin{figure}[!http]
    \centering
    \includegraphics[width = \textwidth]{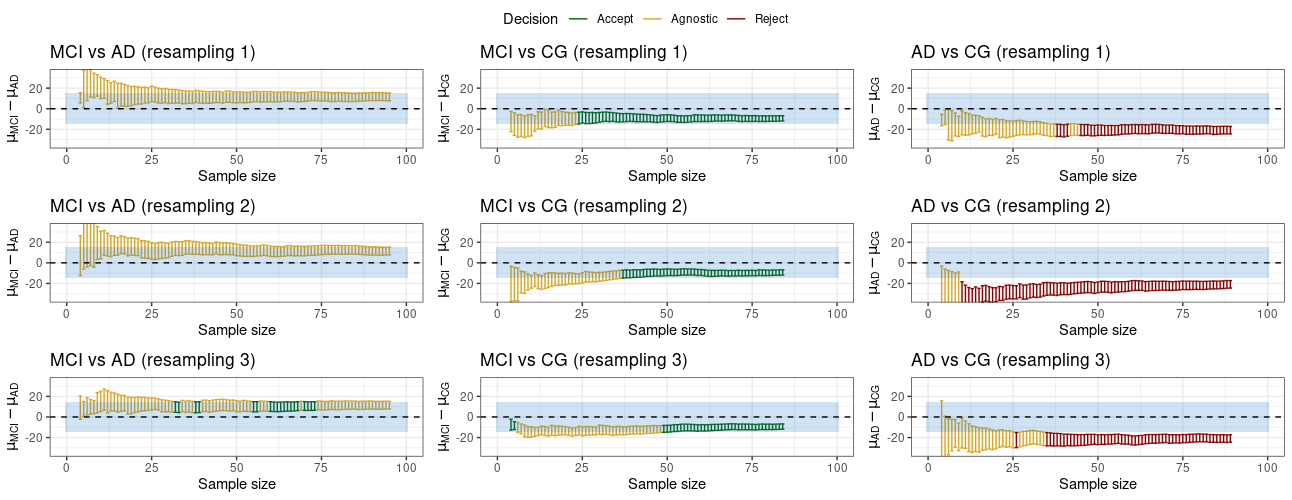}
    \caption{Confidence intervals for the average difference between groups as a function of the sample size for three different resamplings. All conclusions as the sample size increases are the same as the original sorting from \autoref{fig:pragmatic_camcog}, with slight differences in AD vs MCI for resampling 3.}
    \label{fig:camcog_resampling_plots}
\end{figure}

\begin{figure}[!http]
    \centering
    \includegraphics[width = .9\textwidth]{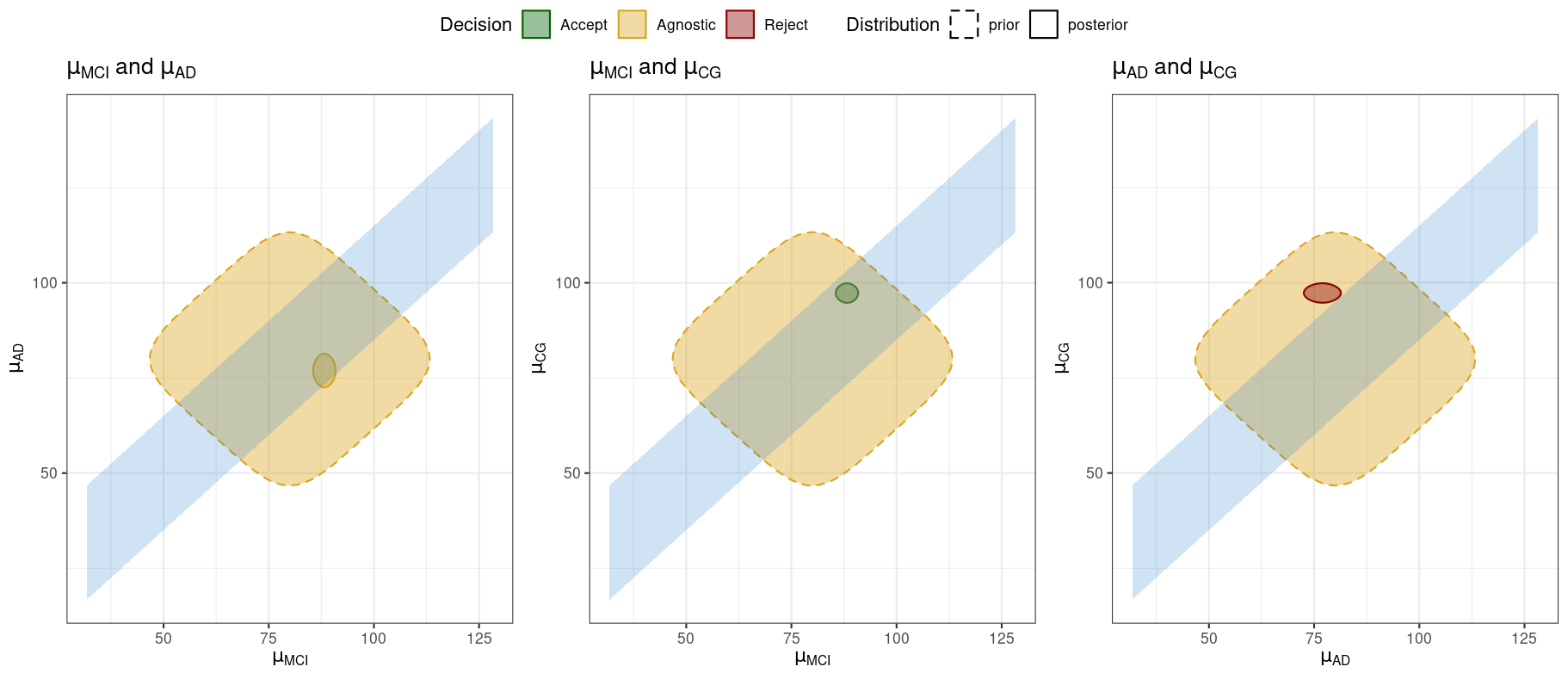}
    \caption{Bayesian pairwise group comparisons of \ourapproach\ for the CAMCOG application. The prior for each pair $(\mu_i, \sigma^2_i)$ is a Normal-inverse gamma with parameters $(80, 1, 3, 3)$. The credible regions are the HPD region of $(\mu_{AD}, \mu_{CG}, \mu_{MCI})$ projected on each pair $(\mu_i,\mu_j)$ (dashed border for prior, solid for posterior). The blue regions represent the null hypotheses $|\mu_i-\mu_j| \leq \Delta$ for the various groups. The conclusions obtained from the posterior are the same as those in \autoref{fig:camcog_react_ellipse}.}
    \label{fig:camcog_bayes}
\end{figure}

\begin{figure}[!http]
    \centering
    \includegraphics[width = .9\textwidth]{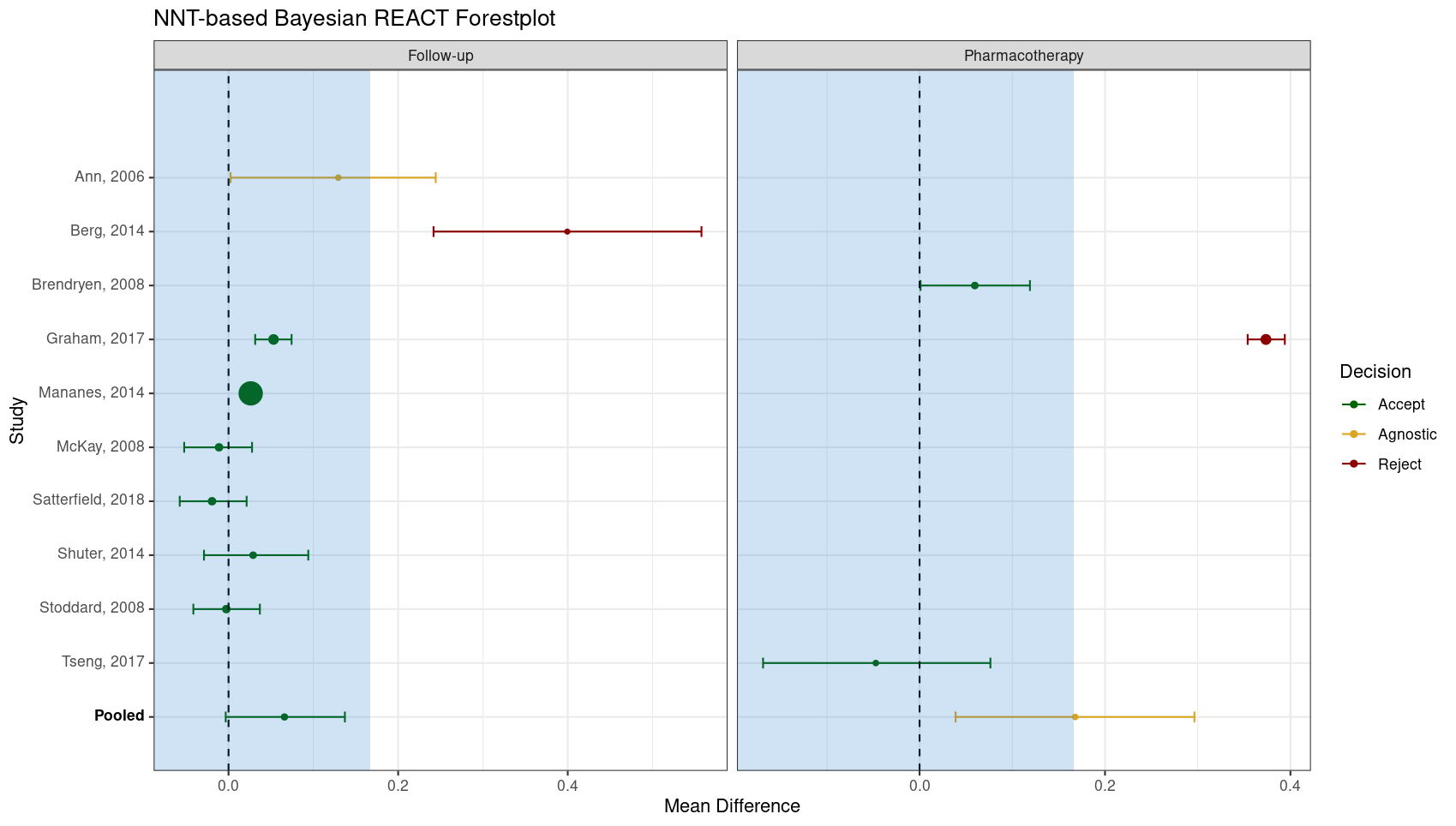}
    \caption{Bayesian REACT using the Jeffreys prior $Beta(1/2, 1/2)$ for the proportion of successes for the meta-analyses described in Section \ref{sec:meta}. The intervals represent the 95\% HPD region, while the point estimates are the posterior means and their respective sizes are the inverse of the posterior variances. The pooled model assumes a hierarchical structure, where the i-th study with the j-th intervention (control or treatment) has probability of success $\theta_{i,j}|\theta_j \sim \text{Logit-Normal}(\text{logit}(\theta_j), 0.1)$, while $\theta_j \sim Beta(1/2, 1/2)$. The conclusions are similar to those obtained in \autoref{fig:nnt_forestplot}.}
    \label{fig:forestplot_bayes}
\end{figure}

\section{Additional Theorems and Definitions}
\label{sec::addTheorems}

\begin{theorem}
\label{thm:nuisance}
\textbf{[Computation of \ourapproach\  in problems with nuisance parameters]}
Assume that the parameter space can be decomposed as $\Phi \times \Psi$, where $\phi \in \Phi$ denote parameters of interest and 
$\psi \in \Psi$ are nuisance parameters.
Let $C_\phi(\D) \subseteq \Phi$ be a region estimator of $\phi$ only. Consider the following   procedure to test 
    $H_0: \phi \in \Phi_0,$ where $\Phi_0 \subset \Phi$:
    \begin{equation*}
    \text{Decision}=
    \begin{cases}
      \text{\textbf{Accept} $H_0$} \ \text{if $C_\phi(\D) \subset \Phi_0$} \\
      \text{\textbf{Reject} $H_0$} \ \text{if $C_\phi(\D) \subset \Phi^c_0$} \\
      \text{\textbf{Remain Agnostic}} \ \text{otherwise}. \\
    \end{cases}
  \end{equation*}
  This is a proper \ourapproach\ procedure.
\end{theorem}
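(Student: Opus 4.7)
The plan is to reduce the stated procedure to the standard \ourapproach{} procedure on the full parameter space $\Theta = \Phi \times \Psi$ via a simple product construction. Concretely, I would set $C(\mathcal{D}) := C_\phi(\mathcal{D}) \times \Psi$ as the confidence region on $\Theta$, and take $\Theta_0 := \Phi_0 \times \Psi$ as the null, which encodes $H_0: \phi \in \Phi_0$ while leaving the nuisance coordinate $\psi$ unconstrained.

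The first step is to check that $C(\mathcal{D})$ has the required $(1-\alpha)$ coverage on $\Theta$. For any $\theta = (\phi,\psi) \in \Theta$, by construction
\begin{equation*}
\P_\theta\bigl(\theta \in C(\mathcal{D})\bigr) = \P_{(\phi,\psi)}\bigl(\phi \in C_\phi(\mathcal{D}),\ \psi \in \Psi\bigr) = \P_{(\phi,\psi)}\bigl(\phi \in C_\phi(\mathcal{D})\bigr) \ge 1-\alpha,
\end{equation*}
since $C_\phi(\mathcal{D})$ was assumed to be a $(1-\alpha)$-level confidence region for $\phi$ uniformly over the nuisance $\psi$. So $C(\mathcal{D})$ qualifies as a valid input to the generic \ourapproach{} procedure of Section~\ref{sec:methods}.

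The second step is to match the three decision branches. Since $\Psi \neq \emptyset$ and the ambient space is $\Phi \times \Psi$, the complement is $\Theta_0^c = \Phi_0^c \times \Psi$. Using the product structure of $C(\mathcal{D})$, I get
\begin{equation*}
C(\mathcal{D}) \subseteq \Theta_0 \iff C_\phi(\mathcal{D}) \subseteq \Phi_0, \qquad C(\mathcal{D}) \subseteq \Theta_0^c \iff C_\phi(\mathcal{D}) \subseteq \Phi_0^c,
\end{equation*}
and the agnostic branch (intersecting both $\Theta_0$ and $\Theta_0^c$) reduces analogously to $C_\phi(\mathcal{D})$ meeting both $\Phi_0$ and $\Phi_0^c$. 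Hence the stated procedure is literally the generic \ourapproach{} procedure applied to $(C(\mathcal{D}), \Theta_0)$, so by definition it is a proper \ourapproach{} procedure (and all guarantees of Section~\ref{subsec:properties}, in particular Properties~\ref{proper:control} and \ref{prop:fwer}, transfer verbatim).

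The argument is almost purely formal; the only spot where care is needed is the identification of $\Theta_0^c$ inside the product space, since one could imagine a reader worrying that excluding $\Phi_0 \times \Psi$ from $\Phi \times \Psi$ might leave something more intricate than $\Phi_0^c \times \Psi$. It does not, and this is the main (minor) obstacle. The same product construction also handles the Bayesian counterpart by taking $C_\phi(\mathcal{D})$ to be a marginal credible region for $\phi$, so the Bayesian version of the result follows by the identical reduction.
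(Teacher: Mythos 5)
Your proposal is correct and follows essentially the same route as the paper's own proof: both reduce the stated rule to the generic \ourapproach{} procedure by forming the product region $C(\D) := C_\phi(\D) \times \Psi$ with null $\Theta_0 = \Phi_0 \times \Psi$ and observing that $C_\phi(\D) \subseteq \Phi_0 \iff C(\D) \subseteq \Phi_0 \times \Psi$ (and likewise for the complement). Your explicit check of the $(1-\alpha)$ coverage of the product region and the remark that $\Theta_0^c = \Phi_0^c \times \Psi$ are small additions the paper leaves implicit, but the argument is the same.
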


\begin{definition}
 \label{def:agnostic}
 A hypothesis test for a hypothesis $H \subset \Theta$ is a function  $\mathcal{R}_H: \mathcal{X} \longrightarrow \{0,1/2,1\}$, where $\mathcal{X}$ is the sample space, such that 0 represents the acceptance of $H$, 1 its rejection, and $1/2$ is the agnostic decision.
 \end{definition}

\begin{definition}
 \label{def:coherence}
 Let $\sigma(\Theta)$ be a subset of $\Theta$ with several hypotheses to be tested. For each $H \in \sigma(\Theta)$, let  $\mathcal{R}_H$ denote a test for $H$. The collection of hypothesis tests
 $(\mathcal{R}_H)_{H \in \sigma(\Theta)}$ is defined to be
 logically coherent if
 \begin{enumerate}
  \item (Propriety) $\mathcal{R}_\Theta \equiv 0$,
  \item (Monotonicity) If $H \subseteq H_*$,
  $\mathcal{R}_{H_*} \leq \mathcal{R}_H$,
  \item (Invertibility) For every $H$,
  $\mathcal{R}_H \equiv 1-\mathcal{R}_{H^c}$,
  \item (Intersection consonance) If
  $\mathcal{H}$ is a collection of 
  hypothesis such that
  $\mathcal{R}_H(D) = 0$, for every $H \in \mathcal{H}$, then
  $\mathcal{R}_{\cap_{H \in \mathcal{H}} H}(D) = 0$.
 \end{enumerate}
\end{definition}

\begin{definition}
 A region test $\mathcal{R}$, 
 has level $(\alpha,\beta)$ if
 $\sup_{H_0} \sup_{\theta \in H_0}
 \P_{\theta}(\mathcal{R}_{H_0} = 1) \leq \alpha$ and 
 $\sup_{H_0} \sup_{\theta \notin H_0}
 \P_{\theta}(\mathcal{R}_{H_0} = 0) \leq \beta$, that is
 the test controls the type I error by $\alpha$ and
 the type II error by $\beta$.
\end{definition}

\begin{definition}
 Let $\mathcal{R}$ be a region test.
 The decisiveness function of $\mathcal{R}$,
 $\beta_{\mathcal{R}}$, determines how
 frequently the test does not remain undecided, 
 for each value of $\theta$ and each
 hypothesis $H_0$, that is,
 \begin{align*}
  \beta_{\mathcal{R}}(\theta, H_0) = 
 \P_{\theta}\left(\mathcal{R}_{H_0} \neq \half\right).
 \end{align*}
\end{definition}

\begin{definition}
 A $(\alpha,\alpha)$-level region test, $\mathcal{R}$,
 is unbiased if 
 $\inf_{\theta, H_0} 
 \beta_{\mathcal{R}}(\theta, H_0) \geq \alpha$.
\end{definition}

\begin{definition}
 An interval region estimator, $\mathcal{C}$, is
 a region estimator such, 
 that there exist real-valued functions,
 $a(\mathcal{D})$ and $b(\mathcal{D})$ and, for every $\mathcal{D}$,
 $\mathcal{C}(\mathcal{D}) = 
 (a(\mathcal{D}), b(\mathcal{D}))$. 
\end{definition}

\begin{definition}
 A region test, $\mathcal{R}_{\mathcal{C}}$ is
 an interval region test if
 $\mathcal{C}$ is an interval region estimator.
\end{definition}

\begin{definition}
 Let $\mathcal{R}$ be an unbiased interval region test and
 $\mathcal{H}$ be a collection of hypotheses.
 $\mathcal{R}$ is uniformly most decisive 
 on $\mathcal{H}$ among
 unbiased $(\alpha,\alpha)$-level region tests 
 based on intervals if,
 for every unbiased, $(\alpha,\alpha)$-level,
 interval region test, $\mathcal{R}^*$,
 \begin{align*}
  \beta_{\mathcal{R}}(\theta,H_0) \geq
  \beta_{\mathcal{R}^*}(\theta,H_0),
  \text{ for every } \theta \text{ and } 
  H_0 \in \mathcal{H}.
 \end{align*}
\end{definition}

\begin{lemma}
 \label{lemma:unbiased}
 If $\mathcal{R}^*$ is an
 unbiased, $(\alpha,\alpha)$-level test based on
 the region $\mathcal{C}^*$, then
 $\phi^* = \I(\theta_{0} \notin \mathcal{C}^*)$ is
 an unbiased $\alpha$-level binary test.
\end{lemma}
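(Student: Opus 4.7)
The plan is to exploit the one-to-one correspondence between the three possible values of the region test $\mathcal{R}^*$ applied to the point null $H_0 = \{\theta_0\}$ and the events determining $\phi^*$. By the three-way rule defining a region test built from $\mathcal{C}^*$, one has $\mathcal{R}^*_{\{\theta_0\}} = 1 \iff \mathcal{C}^* \cap \{\theta_0\} = \emptyset \iff \theta_0 \notin \mathcal{C}^* \iff \phi^* = 1$, while $\mathcal{R}^*_{\{\theta_0\}} = 0 \iff \mathcal{C}^* \subseteq \{\theta_0\}$, and the agnostic outcome occurs iff $\theta_0 \in \mathcal{C}^*$ and $\mathcal{C}^* \neq \{\theta_0\}$. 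I would state this correspondence up front so that every subsequent probability involving $\phi^*$ can be read directly off the behaviour of $\mathcal{R}^*$ at the point null.

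With this in hand, checking that $\phi^*$ has level $\alpha$ is immediate. First I would specialize the $(\alpha,\alpha)$-level hypothesis on $\mathcal{R}^*$ to $H_0 = \{\theta_0\}$ and to the parameter $\theta = \theta_0 \in H_0$; the type I bound then supplies $\P_{\theta_0}(\phi^* = 1) = \P_{\theta_0}(\mathcal{R}^*_{\{\theta_0\}} = 1) \leq \alpha$, which is exactly the size condition for a binary test.

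For unbiasedness I would need $\P_\theta(\phi^* = 1) \geq \alpha$ for every $\theta \neq \theta_0$. The plan is to start from the partition identity $\P_\theta(\mathcal{R}^*_{\{\theta_0\}} \neq \half) = \P_\theta(\mathcal{R}^*_{\{\theta_0\}} = 0) + \P_\theta(\mathcal{R}^*_{\{\theta_0\}} = 1)$, invoke the unbiasedness of $\mathcal{R}^*$ (which gives $\P_\theta(\mathcal{R}^*_{\{\theta_0\}} \neq \half) \geq \alpha$) together with the type II bound $\P_\theta(\mathcal{R}^*_{\{\theta_0\}} = 0) \leq \alpha$ (applicable because $\theta \notin \{\theta_0\}$), and rearrange to obtain $\P_\theta(\phi^* = 1) \geq \alpha - \P_\theta(\mathcal{C}^* \subseteq \{\theta_0\})$.

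The hard part will be closing this last inequality. In the setting where this lemma is meant to be applied, namely interval region tests with $\mathcal{C}^* = (a(\mathcal{D}), b(\mathcal{D}))$ as introduced just above in the appendix, the interval is automatically non-degenerate, so $\{\mathcal{C}^* \subseteq \{\theta_0\}\}$ is empty, $\P_\theta(\mathcal{C}^* \subseteq \{\theta_0\}) = 0$, and $\P_\theta(\phi^* = 1) \geq \alpha$ follows at once. The same reasoning works whenever $\mathcal{C}^*$ is almost surely non-singleton under $\P_\theta$, e.g.\ for confidence sets derived from continuous pivots. In full generality one would need to rule out the pathological possibility that the mass placed on $\{\mathcal{C}^* = \{\theta_0\}\}$ consumes the entire $\alpha$-budget guaranteed by decisiveness; this is the only delicate point, and I expect it to be handled in the proof by explicitly restricting to region estimators whose acceptance event at a point null is null under every $\P_\theta$.
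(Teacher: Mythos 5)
Your proof is correct and follows essentially the same route as the paper: the level bound comes from identifying $\{\phi^* = 1\}$ with rejection of the point null $\{\theta_0\}$, and unbiasedness comes from the decisiveness bound $\P_\theta(\mathcal{R}^*_{\{\theta_0\}} \neq \half) \geq \alpha$. The ``delicate point'' you flag --- that $\mathcal{C}^*$ might equal $\{\theta_0\}$ with positive probability --- is in fact silently assumed away in the paper's proof, which equates $\{\phi^* = 0\}$ with the undecided event without comment; your explicit appeal to the non-singleton (interval) structure of $\mathcal{C}^*$ supplies exactly the justification the paper omits, and matches how the lemma is actually used in Theorem \ref{thm:best_c}, where $\mathcal{C}^*$ is an interval region estimator.
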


\begin{theorem}
 \label{thm:best_c}
 Let $\mathcal{H} = \{\{\theta_0\}: \theta_0 \in \Theta\} \cup 
 \{(-\infty,\theta_0): \theta_0 \in \Theta\} \cup 
 \{(\theta_0,\infty): \theta_0 \in \Theta\}$, that is,
 the collection of all unilateral and bilateral hypotheses.
 For each $\theta$, assume $\phi_{\theta_0}$ is 
 a UMPU $\alpha$-level test for testing $H_0: \theta = \theta_0$. If
 $\mathcal{C}(\mathcal{D}) = \left\{\theta_0: 
 \phi_{\theta_0}(\mathcal{D}) = 0\right\}$ is 
 an interval region estimator, then
 $\mathcal{R}_{\mathcal{C}}$ is uniformly most decisive 
 on $\mathcal{H}$ among
 unbiased $(\alpha,\alpha)$-level interval region tests.
\end{theorem}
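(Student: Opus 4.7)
The plan is to reduce the decisiveness function of an arbitrary interval region test on any hypothesis in $\mathcal{H}$ to the power function of a binary point-null test, and then invoke UMPU optimality pointwise at each $\theta_0$.

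First, I would fix an arbitrary unbiased $(\alpha,\alpha)$-level interval region test $\mathcal{R}^*$ based on $\mathcal{C}^*(\mathcal{D}) = (a(\mathcal{D}), b(\mathcal{D}))$ and check, case by case, how its agnostic event depends on $\theta_0$. For $H_0 = \{\theta_0\}$, acceptance requires $\mathcal{C}^* \subseteq \{\theta_0\}$ and is thus a null event for a genuine interval, so the test is decisive iff $\theta_0 \notin \mathcal{C}^*(\mathcal{D})$. For $H_0 = (-\infty,\theta_0)$, acceptance corresponds to $b(\mathcal{D}) \leq \theta_0$ and rejection to $a(\mathcal{D}) \geq \theta_0$, whose union is, up to boundary, $\{\theta_0 \notin \mathcal{C}^*(\mathcal{D})\}$; the case $H_0 = (\theta_0,\infty)$ is symmetric. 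In all three cases,
\begin{equation*}
\beta_{\mathcal{R}^*}(\theta, H_0) = \P_\theta\bigl(\theta_0 \notin \mathcal{C}^*(\mathcal{D})\bigr).
\end{equation*}

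Next, I would invoke Lemma \ref{lemma:unbiased} to conclude that $\phi^*_{\theta_0}(\mathcal{D}) := \I\bigl(\theta_0 \notin \mathcal{C}^*(\mathcal{D})\bigr)$ is an unbiased $\alpha$-level binary test for the point null $H': \theta = \theta_0$. Applied to the UMPU-inverted region $\mathcal{C}$, the same identity gives $\beta_{\mathcal{R}_{\mathcal{C}}}(\theta, H_0) = \P_\theta(\phi_{\theta_0}(\mathcal{D}) = 1)$, which is exactly the power function of the UMPU test at $\theta$. Since $\phi_{\theta_0}$ is uniformly most powerful among unbiased $\alpha$-level tests of $H'$ and $\phi^*_{\theta_0}$ belongs to that class, we have $\P_\theta(\phi_{\theta_0}=1) \geq \P_\theta(\phi^*_{\theta_0}=1)$ at every $\theta \in \Theta$; note that at $\theta = \theta_0$ this uses the UMPU size identity $\P_{\theta_0}(\phi_{\theta_0}=1) = \alpha$ together with the size bound $\P_{\theta_0}(\phi^*_{\theta_0}=1) \leq \alpha$. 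Chaining these inequalities,
\begin{equation*}
\beta_{\mathcal{R}_{\mathcal{C}}}(\theta, H_0) = \P_\theta(\phi_{\theta_0} = 1) \;\geq\; \P_\theta(\phi^*_{\theta_0} = 1) = \beta_{\mathcal{R}^*}(\theta, H_0)
\end{equation*}
for every $\theta \in \Theta$ and every $H_0 \in \mathcal{H}$, which is the required uniform dominance.

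The main obstacle I anticipate is the bookkeeping in Step 1: the identification of $\{\mathcal{R}^* \neq 1/2\}$ with $\{\theta_0 \notin \mathcal{C}^*(\mathcal{D})\}$ is clean only up to boundary events $\{a = \theta_0\}$ or $\{b = \theta_0\}$, and the open-vs-closed convention on $\mathcal{C}^*$ must match the strict-vs-nonstrict containment in the REACT decision rule. I would resolve this by either assuming absolute continuity of the sampling distribution (so these boundary events have probability zero uniformly in $\theta$) or by adopting a single convention (say, open intervals $\mathcal{C}^*$ and decisions defined by strict containment) throughout. The remainder of the argument is an essentially mechanical application of Lemma \ref{lemma:unbiased} together with the defining property of UMPU tests, handled one $\theta_0$ at a time so that the three families comprising $\mathcal{H}$ can be treated in parallel.
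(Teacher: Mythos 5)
Your proof is correct and takes essentially the same route as the paper's: identify the decisiveness function with $\P_\theta(\theta_0 \notin \mathcal{C}^*)$, convert the region test into the binary point-null test $\I(\theta_0 \notin \mathcal{C}^*)$ via Lemma \ref{lemma:unbiased}, and apply the UMPU property pointwise in $\theta$. The paper's proof simply asserts the first identification without your case-by-case bookkeeping over the three families in $\mathcal{H}$ or the boundary-convention caveat, so your version is, if anything, more careful on a step the paper leaves implicit.
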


\begin{example}
 Let $\Phi$ be the cumulative density function of
 a standard normal, $X_1,\ldots,X_n$ be i.i.d. and
 $X_1 \sim N(\theta, \sigma^2_0)$, where
 $\theta$ is unknown and $\sigma^2_0$ is known.
 For each $H_0: \theta = \theta_0$,
 $\phi_{\theta_0} = \I\left(\frac
 {\sqrt{n}|\bar{X}-\theta_0|}{\sigma_0} \geq -\Phi(0.5\alpha) \right)$ is the UMPU $\alpha$-level test for $H_0$.
 Note that
 \begin{align*}
  \mathcal{C} &:= \{\theta_0: \phi_{\theta_0} = 0\} \\
 &= \left(\bar{X}
 +\frac{\Phi(0.5\alpha)\sigma_0}{\sqrt{n}}, 
 \bar{X}-\frac{\Phi(0.5\alpha)\sigma_0}{\sqrt{n}} \right)
 \end{align*}
 Hence, it follows from Theorem \ref{thm:best_c} that
 $\mathcal{R}_{\mathcal{C}}$ is uniformly most decisive 
 on $\mathcal{H}$ among
 unbiased $(\alpha,\alpha)$-level interval region tests.
\end{example}

\section{Proofs}
\label{sec:proofs}

\begin{lemma}
 \label{lemma:pval}
 Let $C(\D):=\left\{\theta \in \Theta: \text{p-val}_{\D}(\theta) > \alpha \right\}.$
 For every $H \subseteq \Theta$,
 $C(\D) \cap H = \emptyset$ if and only if
 $\max_{\theta \in H}\text{p-val}_{\D}(\theta) \leq \alpha$.
\end{lemma}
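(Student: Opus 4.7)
The plan is to prove the lemma by directly unpacking the definition of $C(\D)$ and establishing the two directions of the biconditional as essentially contrapositive restatements of each other. The key observation is that the condition $\theta \notin C(\D)$ is, by definition, equivalent to $\text{p-val}_{\D}(\theta) \leq \alpha$, and so saying that $H$ is disjoint from $C(\D)$ is the same as saying that every $\theta \in H$ satisfies $\text{p-val}_{\D}(\theta) \leq \alpha$.

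For the forward direction, I would assume $C(\D) \cap H = \emptyset$. Then for every $\theta \in H$ we have $\theta \notin C(\D)$, which by the definition of $C(\D)$ means $\text{p-val}_{\D}(\theta) \leq \alpha$. Taking the maximum over $\theta \in H$ preserves the inequality, giving $\max_{\theta \in H} \text{p-val}_{\D}(\theta) \leq \alpha$. For the reverse direction, I would assume $\max_{\theta \in H} \text{p-val}_{\D}(\theta) \leq \alpha$. Then every $\theta \in H$ satisfies $\text{p-val}_{\D}(\theta) \leq \alpha$, so no $\theta \in H$ lies in $C(\D) = \{\theta : \text{p-val}_{\D}(\theta) > \alpha\}$. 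Hence $C(\D) \cap H = \emptyset$.

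Since both implications are immediate once the defining inequality of $C(\D)$ is inverted, there is no substantive obstacle. The only mild technical point to address is whether the maximum is attained; if $H$ is infinite and the supremum is not achieved, the statement should be interpreted with $\sup$ in place of $\max$. I would briefly note that the equivalence ``for every $\theta \in H$, $\text{p-val}_{\D}(\theta) \leq \alpha$'' $\iff$ ``$\sup_{\theta \in H} \text{p-val}_{\D}(\theta) \leq \alpha$'' holds regardless of attainment, so the lemma is valid under either reading. This completes the proof.
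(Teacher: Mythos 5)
Your proof is correct and follows essentially the same argument as the paper's, which proves one direction directly and the other by contraposition (showing $\max > \alpha$ yields a point of $H$ inside $C(\D)$); your remark about reading $\max$ as $\sup$ when the maximum is not attained is a sensible clarification the paper omits.
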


\begin{proof}
 If $\max_{\theta \in H}\text{p-val}_{\D}(\theta) \leq \alpha$, then
 for every $\theta \in H$, $\text{p-val}_{\D}(\theta) \leq \alpha$ and,
 by construction, $\theta \notin C(\D)$. That is,
 $C(\D) \cap H = \emptyset$.
 If $\max_{\theta \in H}\text{p-val}_{\D}(\theta) > \alpha$, then
 there exists $\theta \in H$, such that
 $\text{p-val}_{\D}(\theta) > \alpha$ and,
 by construction, $\theta \in C(\D)$. Therefore,
 $C(\D) \cap H \neq \emptyset$.
\end{proof}

\begin{proof}[Proof of Property \ref{proper:pvals}]
 First, observe that
 \begin{align*}
  \P_{\theta}(\theta \notin C(\mathcal{D}))
  &= \P_{\theta}(\text{p-val}_{\D}(\theta) \leq \alpha) \leq \alpha.
 \end{align*}
 Hence, $C(\mathcal{D})$ is a $1-\alpha$ confidence interval.
 The rest of the proof follows directly from 
 Lemma \ref{lemma:pval}.
\end{proof}

\begin{definition}
\label{def:region-test}
 Let $C(\mathcal{D})$ be 
 a region estimator for $\theta$, that is, a function $C: 
 \mathcal{X} \rightarrow \sigma(\Theta)$, where $\mathcal{X}$ is the sample space.
 The region test for $H_0$ based on $C$,
 $\mathcal{R}_{C,H_0}$, is
 \begin{align*}
  \mathcal{R}_{C,H_0}
  &= \begin{cases}
   0 & \text{, if } C \subseteq H_0 \\
   1 & \text{, if } C \subseteq H_0^c \\
   \half & \text{, otherwise.} 
  \end{cases}
 \end{align*}
\end{definition}

\begin{proof}[Proof of Property \ref{prop:projection-equivalence}]
    The proof is equivalent to showing that $\mathcal{R}_{C, H_0} = \mathcal{R}_{C_\Phi, \Phi_0}$. Also, when looking at the parameter space as a whole, we note that $H_0$ possesses the following property:
    \begin{equation}
    \label{eq:const}
        \left\{(\phi, \psi) \in H_0 \Longrightarrow \bigcup_{\psi \in \Psi}(\phi, \psi) \subset H_0\right\}, \quad \forall \phi \in \Phi.
    \end{equation}
    Now, let us consider the possible settings of $\mathcal{R}_{C, H_0}$. First, when the original test accepts $H_0$,
    \begin{equation}
        \label{eq:p8-ac}
        \mathcal{R}_{C, H_0} = 0 \stackrel{\text{def. }\ref{def:region-test}}{\Longleftrightarrow} C(\mathcal{D}) \subseteq H_0 \stackrel{\eqref{eq:const}}{\Longleftrightarrow} C_\Phi(\mathcal{D}) \times \Psi \subseteq H_0 = \Phi_0 \times \Psi \stackrel{\text{def. }\ref{def:region-test}}{\Longleftrightarrow} \mathcal{R}_{C_\Phi, \Phi_0} = 0.
    \end{equation}
    As for when the test rejects $H_0$,
    \begin{equation}
        \label{eq:p8-rej}
        \mathcal{R}_{C, H_0} = 1 \stackrel{\text{prop. }\ref{prop:symmetry}}{\Longleftrightarrow} \mathcal{R}_{C, H_0^c} = 0 \stackrel{\eqref{eq:p8-ac}}{\Longleftrightarrow} \mathcal{R}_{C_\Phi, \Phi_0^c} = 0 \stackrel{\text{prop. }\ref{prop:symmetry}}{\Longleftrightarrow} \mathcal{R}_{C_\Phi, \Phi_0} = 1.
    \end{equation}
    Lastly, since \eqref{eq:p8-ac} and \eqref{eq:p8-rej} are necessary and sufficient conditions for respectively accepting and rejecting $H_0$, it follows that $\mathcal{R}_{C, H_0} = 1/2 \Longleftrightarrow \mathcal{R}_{C_\Phi, \Phi_0} = 1/2$, thus concluding the proof.
    
\end{proof}

\begin{proof}[Proof of Property \ref{proper:control}] (Adapted from \citet{Coscrato2018}).
 Since $C(\mathcal{D})$ has confidence $1-\alpha$,
 $\P_{\theta}(\theta \notin C(\mathcal{D})) \leq \alpha$,
 for every $\theta \in \Theta$. Therefore,
 \begin{align*}
  \sup_{\theta_0 \in H_0}
  {\P_{\theta_0}(\mathcal{R}_{C,H_0}=1)} 
  &=  \sup_{\theta_0 \in H_0}
  {\P_{\theta_0}(C(\mathcal{D}) \subseteq H_0^c)}
  \leq \sup_{\theta_0 \in H_0}
  {\P_{\theta_0}(\theta_0 \notin C(\mathcal{D}))}
  \leq \alpha \\
  \sup_{\theta_1 \in H_0^c}
  {\P_{\theta_1}(\mathcal{R}_{C,H_0}=0)} 
  &=  \sup_{\theta_1 \in H_0^c}
  {\P_{\theta_1}(C(\mathcal{D}) \subseteq H_0)}
  \leq \sup_{\theta_1 \in H_0^c}
  {\P_{\theta_1}(\theta_1 \notin C(\mathcal{D})}
  \leq \alpha
 \end{align*}
\end{proof}

\begin{definition}
 The family-wise type I error, $FWER_I$,
 is the probability that some
 truly null hypothesis is incorrectly rejected.
 Similarly, the family-wise type II error, $FWER_{II}$,
 is the probability that some
  truly non-null hypothesis is incorrectly accepted. 
 That is,
 \begin{align*}
  FWER_I(\theta) 
  &:= \P_{\theta}(\cup_{H: \theta \in H} \mathcal{R}_{C,H} = 1) \\
  FWER_{II}(\theta) 
  &:= \P_{\theta}(\cup_{H: \theta \notin H} \mathcal{R}_{C,H} = 0).
 \end{align*}
\end{definition}

\begin{proof}[Proof of Property \ref{prop:fwer}]
 For every $H \in \sigma(\Theta)$,
 \begin{equation*}
  \begin{split}
   FWER_I(\theta)
   &= \P_{\theta}(\cup_{H: \theta \in H} \mathcal{R}_{C,H} = 1) \\
   &= \P_{\theta}(\cup_{H: \theta \in H} C \subseteq H^c) \\
   &= \P_{\theta}(\cup_{H: \theta \in H} C \cap H = \emptyset) \\
   &= \P_{\theta}(\theta \notin C) \leq \alpha.
  \end{split} 
  \hspace{2cm}
  \begin{split}
   FWER_{II}(\theta)
   &= \P_{\theta}(\cup_{H: \theta \notin H} \mathcal{R}_{C,H} = 0) \\
   &= \P_{\theta}(\cup_{H: \theta \notin H} C \subseteq H) \\
   &= \P_{\theta}(\theta \notin C) \leq \alpha. \\
   &
  \end{split}
\end{equation*}
\end{proof}

\begin{definition}
\label{def:convergeSet}
 A confidence set, $C(\mathcal{D})$,
 converges to the true $\theta$ if,
 for every $\theta_0 \in \Theta$ and
 $\epsilon$-ball around $\theta_0$,
 $B(\theta_0,\epsilon)$,
 \begin{align*}
  \limn \P_{\theta_0}(C(\mathcal{D}) 
  \subseteq B(\theta_0, \epsilon)) = 1.
 \end{align*}
\end{definition}

\begin{definition}
 \label{def:ic_converge}
 For $A \subseteq \Theta$, let
$\abxring{A}$ denote the interior of $A$. 
 A hypothesis test for $H_0$, 
 $\mathcal{R}_{H_0}$, is consistent if,
 for every $\theta \in \abxring{H_0}$,
 $\limn \P_{\theta}(\mathcal{R}_{H_0} = 0) = 1$ and,
 for every $\theta \in \abxring{H_0^c}$,
 $\limn \P_{\theta}(\mathcal{R}_{H_0} = 1) = 1$.
\end{definition}

\begin{proof}[Proof of Property 
 \ref{proper:consistency}]
 Let $\theta \in \abxring{H_0}$.
 There exists $\epsilon > 0$ such that
 $B(\theta, \epsilon) \subseteq H_0$.
 Hence,
 \begin{align*} 
  \limn \P_{\theta}(\mathcal{R}_{H_0,C} = 0) 
  &= \limn \P_{\theta}(C(\mathcal{D}) \subseteq H_0) \\
  &\leq \limn \P_{\theta}(C(\mathcal{D}) \subseteq B(\theta,\epsilon)) 
  & B(\theta,\epsilon) \subseteq H_0 \\
  &= 1 & \text{Definition \ref{def:ic_converge}}
 \end{align*}
 Similarly, let $\theta \in \abxring{H_0^c}$.
 There exists $\epsilon > 0$ such that
 $B(\theta, \epsilon) \subseteq H_0^c$. Hence,
 \begin{align*} 
  \limn \P_{\theta}(\mathcal{R}_{H_0,C} = 1) 
  &= \limn \P_{\theta}(C(\mathcal{D}) \subseteq H_0^c) \\
  &\leq \limn \P_{\theta}(C(\mathcal{D}) \subseteq B(\theta,\epsilon)) 
  & B(\theta,\epsilon) \subseteq H_0^c \\
  &= 1 & \text{Definition \ref{def:ic_converge}}
 \end{align*}
\end{proof}

\begin{proof}[Proof of Property \ref{proper:coherence}] (Adapted from \citet{Esteves2016}).
 Let $(\mathcal{R}_{H})_{H \in \sigma(\Theta)}$ be 
 a collection of tests based on
 confidence set $C$.
 \begin{enumerate}
  \item Since $C(\mathcal{D}) \subseteq \Theta$,
  $\mathcal{R}_{\Theta} \equiv 0$,
  \item Let $H \subseteq H_*$. 
  If $\mathcal{R}_H(D) = 0$, then $C(D) \subset H$.
  Hence, $C(D) \subset H_*$, that is,
  $\mathcal{R}_{H_*}(D) = 0$. Also,
  if $\mathcal{R}_{H_*}(D) = 1$, then
  $C(D) \subseteq H_*^c$. Hence,
  $C(D) \subseteq H^c$ and
  $\mathcal{R}_{H}(D) = 1$. Conclude that
  $\mathcal{R}_{H_*} \leq \mathcal{R}_H$.
  \item It is sufficient to prove that,
  for every $H \in \sigma(\Theta)$,
  $\mathcal{R}_H(D) = 0$ if and only if 
  $\mathcal{R}_{H^c}(D) = 1$.
  The proof follows from the fact that
  $\mathcal{R}_H(D) = 0$ when 
  $C(D) \subseteq H$ and
  $\mathcal{R}_{H^c}(D) = 1$ when
  $C(D) \subseteq (H^c)^c$, that is,
  $C(D) \subseteq H$.
  \item If $\mathcal{R}_H(D) = 0$, 
  for every $H \in \mathcal{H}$, then
  $C(D) \subseteq H$, for every $H \in \mathcal{H}$.
  Hence, $C(D) \subseteq \cap_{H \in \mathcal{H}} H$.
  That is, $\mathcal{R}_{ \cap_{H \in \mathcal{H}} H }(D) = 0$.
 \end{enumerate}
\end{proof}

\begin{proof}[Proof of Property \ref{prop:symmetry}]
 Property \ref{prop:symmetry} is
 a consequence of invertibility
 in Definition \ref{def:coherence}
 Hence, this property is 
 a corollary of 
 Property \ref{proper:coherence}.
\end{proof}

\begin{proof}[Proof of Theorem \ref{thm:nuisance}]
    Let   $C(\D):=C_\phi(\D) \times \Psi$. Notice that
    $$C_\phi(\D) \subset \Phi_0  \iff C(\D) \subset \Phi_0 \times \Psi,$$
and therefore the procedure stated on the theorem is equivalent to the following \ourapproach\ procedure:
\begin{equation*}
    \text{Decision}=
    \begin{cases}
      \text{\textbf{Accept} $H_0$} \ \text{if $C(\D) \subset \Phi_0 \times \Psi$} \\
      \text{\textbf{Reject} $H_0$} \ \text{if $C(\D) \subset \Phi^c_0 \times \Psi$} \\
      \text{\textbf{Remain Agnostic}} \ \text{otherwise}. \\
    \end{cases}
  \end{equation*}
The conclusion follows.
\end{proof}

\begin{proof}[Proof of Lemma \ref{lemma:unbiased}]
 Since $\mathcal{R}^*$ is a region-based test,
 $\{\mathcal{R}^* \text{ rejects } \{\theta_0\}\}$
 is the same as $\{\phi^* = 1\}$. Hence,
 since $\mathcal{R}^*$ is a $(\alpha,\alpha)$-level test
 \begin{align*}
  \P_{\theta_0}(\phi^* = 1) &=
  \P_{\theta_0}(\mathcal{R}^* \text{ rejects } \{\theta_0\})
  \leq \alpha.
 \end{align*}
 That is, $\phi^*$ is
 a $\alpha$-level binary test.
 Also, since $\mathcal{R}^*$ is unbiased,
 \begin{align*}
  \P_{\theta}(\phi^* = 0) &=
  \P_{\theta_0}(\mathcal{R}^* \text{ remains undecided about } \{\theta_0\}) \leq 1-\alpha
 \end{align*}
 Hence, $\phi^*$ is unbiased.
\end{proof}

\begin{proof}[Proof of Theorem \ref{thm:best_c}]
 Let $\mathcal{R}^*$ be an
 unbiased, $(\alpha,\alpha)$-level
 based on the interval region $\mathcal{C}^*$.
 For every $H_0 \in H$, define
 $\phi^* = \I(\theta_{0} \notin \mathcal{C}^*)$.
 \begin{align*}
  \beta_{\mathcal{R}}(\theta,H_0)
  &= \P_{\theta}(\theta_0 \notin \mathcal{C}) \\
  &= \P_{\theta}(\phi_{\theta_0} = 1) \\
  & \geq \P_{\theta}(\phi^* = 1) 
  & \text{Lemma \ref{lemma:unbiased} and 
  $\phi_{\theta_0}$ is UMPU} \\
  &= \P_{\theta}(\theta_0 \notin \mathcal{C}^*) \\
  &= \beta_{\mathcal{R}^*}(\theta,H_0).
 \end{align*}
\end{proof}

\section{Additional Results}

\begin{definition}
 The Bayesian family-wise
 false conclusion error of
 $(\mathcal{R}_H)_{H \in \sigma(\Theta)}$, 
 $\gamma$, is:
 \begin{align*}
  \gamma &=
  \P(\exists H \in \sigma(\Theta):
  (\theta \in H \text{ and } \mathcal{R}_H = 1) 
  \text{ or } 
  (\theta \notin H \text{ and } \mathcal{R}_H = 0))
 \end{align*}
\end{definition}

\begin{theorem}
 \label{thm:family-wise-bayes}
 If $R$ is a credibility region for $\theta$
 with credibility $1-\alpha$ and
 $(\mathcal{R}_H)_{H \in \sigma(\Theta)}$ is
 based on $R$, then
 $\gamma \leq \alpha$.
\end{theorem}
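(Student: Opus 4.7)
The plan is to adapt the frequentist argument used for Property \ref{prop:fwer} to the Bayesian setting. First I would unpack the definition: by Definition \ref{def:region-test}, $\mathcal{R}_H = 1$ forces $R \subseteq H^c$, and $\mathcal{R}_H = 0$ forces $R \subseteq H$. This yields two immediate implications: if $\mathcal{R}_H = 1$ and $\theta \in H$, then $R \subseteq H^c$ gives $\theta \notin R$; symmetrically, if $\mathcal{R}_H = 0$ and $\theta \notin H$, then $R \subseteq H$ also gives $\theta \notin R$.

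Second, I would translate these pointwise implications into an event-containment statement:
\[
\bigl\{\exists H \in \sigma(\Theta): (\theta \in H,\, \mathcal{R}_H = 1) \text{ or } (\theta \notin H,\, \mathcal{R}_H = 0)\bigr\} \subseteq \{\theta \notin R\}.
\]
The existential quantifier over $H$ causes no trouble: a single witnessing $H$ already forces $\theta \notin R$ by the previous step, regardless of what happens for the other hypotheses. Taking probabilities preserves the inclusion, yielding $\gamma \leq \P(\theta \notin R)$.

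To finish, I would evaluate the right-hand side using the credibility property of $R$. Since $\P(\theta \in R \mid \D) = 1 - \alpha$ almost surely, the law of total expectation gives $\P(\theta \in R) = \E[\P(\theta \in R \mid \D)] = 1 - \alpha$, so $\P(\theta \notin R) = \alpha$ and therefore $\gamma \leq \alpha$.

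The only subtlety---rather than a genuine obstacle---is making sure the probability in the definition of $\gamma$ is interpreted as the joint probability over $(\theta, \D)$ under the Bayesian model, not as a posterior probability for fixed data; otherwise the tower-property step at the end would be vacuous. Beyond that caveat, the argument is a direct Bayesian transcription of the frequentist proof of Property \ref{prop:fwer}, with the roles of ``data under fixed $\theta$'' and ``$\theta$ under a given posterior'' interchanged.
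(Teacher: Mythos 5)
Your proposal is correct and is essentially the paper's own proof: the paper likewise rewrites the event defining $\gamma$ using $\mathcal{R}_H = 1 \Leftrightarrow R \subseteq H^c$ and $\mathcal{R}_H = 0 \Leftrightarrow R \subseteq H$, observes that any witnessing $H$ forces $\theta \notin R$, and bounds $\gamma \leq \P(\theta \notin R) = \alpha$. Your explicit remark about interpreting the probability jointly over $(\theta,\D)$ and invoking the tower property is a useful clarification of a step the paper leaves implicit, but it does not change the argument.
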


\begin{proof}[Proof of Theorem \ref{thm:family-wise-bayes}]
 \begin{align*}
  \gamma &=
  \P(\exists H \in \sigma(\Theta):
  (\theta \in H \text{ and } \mathcal{R}_H = 1) 
  \text{ or } 
  (\theta \notin H \text{ and } \mathcal{R}_H = 0)) \\
  &= \P(\exists H \in \sigma(\Theta):
  (\theta \in H \text{ and } R \subseteq H^c)
  \text{ or } 
  (\theta \notin H \text{ and } R \subseteq H)) \\
  &\leq \P(\theta \notin R) = \alpha
 \end{align*}
\end{proof}

\begin{theorem}
    \textbf{[Computation of \fourapproach\ using p-values in problems with nuisance parameters]}
\label{thm:pvals_nuisance}
Assume that the parameter space can be decomposed as $\Phi \times \Psi$, where $\phi \in \Phi$ denote parameters of interest and 
$\psi \in \Psi$ are nuisance parameters.
    Let $\text{p-val}_{\D}(\phi_0)$ be a  p-value for the hypothesis $H_0:\phi=\phi_0$. Consider the following   procedure to test 
    $H_0: \phi \in \Phi_0,$ where $\Phi_0 \subset \Phi$:
    \begin{equation*}
    \text{Decision}=
    \begin{cases}
      \text{\textbf{Accept} $H_0$} \ \text{if $\sup_{\phi \in \Phi^c_0}\text{p-val}_{\D}(\phi) \leq \alpha$} \\
      \text{\textbf{Reject} $H_0$} \ \text{if $\sup_{\phi \in \Phi_0} \text{p-val}_{\D}(\phi) \leq \alpha$} \\
      \text{\textbf{Remain Agnostic}} \ \text{otherwise} \\
    \end{cases}
  \end{equation*}
  This procedure is equivalent to the following \fourapproach\ procedure:
\begin{equation*}
    \text{Decision}=
    \begin{cases}
      \text{\textbf{Accept} $H_0$} \ \text{if $C(\D) \subset \Phi_0 \times \Psi$} \\
      \text{\textbf{Reject} $H_0$} \ \text{if $C(\D) \subset \Phi^c_0 \times \Psi$} \\
      \text{\textbf{Remain Agnostic}} \ \text{otherwise}. \\
    \end{cases}
  \end{equation*}
   where $C(\D):=C_\phi(\D) \times \Psi$
   and 
     $C_\phi(\D):=\left\{\phi \in \Phi: \text{p-val}_{\D}(\phi)  > \alpha \right\}.$
   Moreover, it can be more easily written as
    \begin{equation*}
    \text{Decision}=
    \begin{cases}
      \text{\textbf{Accept} $H_0$} \ \text{if $C_\phi(\D) \subset \Phi_0$} \\
      \text{\textbf{Reject} $H_0$} \ \text{if $C_\phi(\D) \subset \Phi^c_0$} \\
      \text{\textbf{Remain Agnostic}} \ \text{otherwise}. \\
    \end{cases}
  \end{equation*}
 Also, $C_\phi(\D)$ is a (1-$\alpha$)-level confidence set for $\phi$, and 
 $C(\D)$ is a (1-$\alpha$)-level confidence set for $(\phi,\psi)$.
\end{theorem}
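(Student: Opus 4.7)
The plan is to decompose this theorem into two pieces that are already available: the p-value/confidence-set bridge from Lemma \ref{lemma:pval} (and Property \ref{proper:pvals}) on the $\Phi$-coordinate alone, and the nuisance-parameter reduction from Theorem \ref{thm:nuisance}. The theorem claims three things: (a) the p-value decision rule equals the \fourapproach{} rule with $C(\D)=C_\phi(\D)\times\Psi$; (b) this rule also equals the simpler rule that only checks $C_\phi(\D)$ against $\Phi_0$; and (c) $C_\phi(\D)$ and $C(\D)$ have the claimed frequentist coverage. I would prove them in that order.

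First I would establish (a) by applying Lemma \ref{lemma:pval} on $\Phi$ rather than on $\Theta$. The lemma applies verbatim to any set of parameters equipped with a p-value function: its proof uses only the defining property $\phi\in C_\phi(\D)\Leftrightarrow \text{p-val}_\D(\phi)>\alpha$. Thus $C_\phi(\D)\cap\Phi_0=\emptyset$ iff $\sup_{\phi\in\Phi_0}\text{p-val}_\D(\phi)\leq\alpha$, and symmetrically on $\Phi_0^c$. Translating containments into disjointness, we get $C_\phi(\D)\subseteq\Phi_0\iff C_\phi(\D)\cap\Phi_0^c=\emptyset\iff \sup_{\phi\in\Phi_0^c}\text{p-val}_\D(\phi)\leq\alpha$, and likewise for rejection. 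So the three cases of the p-value rule coincide case-by-case with the three cases of the rule that tests $C_\phi(\D)$ against $\Phi_0$, which is exactly the procedure described in Theorem \ref{thm:nuisance}.

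Next I would invoke Theorem \ref{thm:nuisance} directly to identify that rule with the full \fourapproach{} rule based on $C(\D)=C_\phi(\D)\times\Psi$, giving (b) and (a) together. Concretely, the theorem says that replacing the region estimator for $(\phi,\psi)$ by the cylinder $C_\phi(\D)\times\Psi$ yields a bona fide \fourapproach{} procedure whose decisions depend only on whether $C_\phi(\D)$ sits in $\Phi_0$ or $\Phi_0^c$; so the simpler display and the $C(\D)$-display produce the same decision on every sample.

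Finally, for the confidence-level claim (c), I would use that $\text{p-val}_\D(\phi_0)$ is by assumption a valid p-value for $H_0:\phi=\phi_0$, meaning $\sup_{\psi\in\Psi}\P_{(\phi_0,\psi)}(\text{p-val}_\D(\phi_0)\leq\alpha)\leq\alpha$. Therefore for every $(\phi_0,\psi)\in\Phi\times\Psi$,
\begin{equation*}
\P_{(\phi_0,\psi)}(\phi_0\notin C_\phi(\D))=\P_{(\phi_0,\psi)}(\text{p-val}_\D(\phi_0)\leq\alpha)\leq\alpha,
\end{equation*}
so $C_\phi(\D)$ has frequentist level $1-\alpha$. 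Since $(\phi_0,\psi)\in C(\D)\iff\phi_0\in C_\phi(\D)$, the cylinder $C(\D)$ inherits the same $(1-\alpha)$ coverage on $\Phi\times\Psi$.

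The only subtle point is purely notational rather than mathematical: one has to be careful that the p-value in the statement is a p-value for the composite null $\{(\phi_0,\psi):\psi\in\Psi\}$ uniformly in the nuisance parameter, since that is what is needed in the last step; once this reading is fixed, all three parts follow mechanically by chaining Lemma \ref{lemma:pval}, Theorem \ref{thm:nuisance}, and the definition of a valid p-value, with no new calculation required.
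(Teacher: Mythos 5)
Your proposal is correct and follows essentially the same route as the paper: the paper's proof is a direct chain of equivalences translating $\sup_{\phi}\text{p-val}_{\D}(\phi)\leq\alpha$ into the containment $C(\D)\subset\Phi_0\times\Psi$ via the cylinder structure $C(\D)=C_\phi(\D)\times\Psi$, followed by the one-line coverage computation from p-value validity — which is exactly the content of your chaining of Lemma \ref{lemma:pval} (applied on $\Phi$) with Theorem \ref{thm:nuisance}. Your explicit remark that validity must hold uniformly over the nuisance parameter, and your use of an inequality rather than the paper's exact equality in the coverage step, are minor refinements of the same argument rather than a different approach.
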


\begin{proof}
    Notice that 
    \begin{align*}
        \sup_{\phi \in \Phi_0^c} \text{p-val}_{\D}(\phi) \leq \alpha &\iff \text{For every $\phi \in \Phi_0^c$,  } \text{p-val}_{\D}(\phi) \leq \alpha \\ 
        &\iff  \text{For every $\phi \in \Phi_0^c$ and $\psi \in \Psi$,  } (\phi,\psi) \notin C(\D)\\ 
        &\iff C(\D) \subset \Phi_0 \times \Psi \\ 
    \end{align*}
    Thus, the procedure accepts $H_0$ if, and only if, $C(\D) \subset \Phi_0 \times \Psi$. Similarly, the procedure rejects $H_0$ if, and only if, $C(\D) \subset \Phi^c_0 \times \Psi$. It follows that this procedure is a \ourapproach-type procedure. Now, if the p-values are valid, for every $(\phi,\psi) \in \Phi \times \Psi$,
    $$\P_{(\phi,\psi)}\left((\phi,\psi) \in C(\D)\right)=\P_{(\phi,\psi)} \left(\text{p-val}_{\D}(\phi)>\alpha \right)=1-\alpha,$$
which concludes the proof.
\end{proof}

\begin{definition} [\citet{Pereira1999}]
    The e-value for a hypothesis $H_0:\theta=\theta_0$ is the posterior probability
    $$\text{e-val}_{\D}(\theta_0)=1-\P\left(\theta \in T_{\D} |\D\right),$$
    where 
    $$T_{\D} = \{\theta: f(\theta|\D) \geq f(\theta_0|\D)\} $$
\end{definition}

\begin{theorem} \textbf{[Computation of \bourapproach\ using e-values]}
\label{thm:evals}
    Let $\text{e-val}_{\D}(\theta_0)$ be an  e-value \citep{Pereira1999} for the hypothesis $H_0:\theta \in \Theta_0$. Then  the following  is a \bourapproach \ procedure:
    \begin{equation*}
    \text{Decision}=
    \begin{cases}
      \text{\textbf{Accept} $H_0$} \ \text{if $\max_{\theta \in \Theta_0^c} \text{e-val}_{\D}(\theta) \leq \alpha$} \\
      \text{\textbf{Reject} $H_0$} \ \text{if $\max_{\theta \in \Theta_0}\text{e-val}_{\D}(\theta) \leq \alpha$} \\
      \text{\textbf{Remain Agnostic}} \ \text{otherwise} \\
    \end{cases}
  \end{equation*}
    The  (1-$\alpha$)-level Bayes
    set that corresponds to this procedure is the ($1-\alpha$)-level Highest Posterior Density (HPD) region for $\theta$:
  $$C(\D):=\left\{\theta \in \Theta: \pi(\theta|\D)  > C \right\},$$
  where $C$ is such that 
  $$\P\left(\theta \in C(\D)| \D \right)=1-\alpha.$$
  This procedure is equivalent to the GFBST \citep{Stern2017} and, if $C$ is an interval, to ROPE \citep{kruschke2018rejecting}.
\end{theorem}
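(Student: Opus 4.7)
The plan is to reduce the theorem to showing a single point-wise equivalence between small e-values and exclusion from the HPD region, after which the decision rule in terms of e-values collapses exactly onto the \bourapproach\ rule with $C(\D)$ chosen as the HPD set. Concretely, I would proceed in three steps.

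First, I would fix $C$ as the threshold characterizing the HPD region, so $C(\D) = \{\theta: \pi(\theta|\D) \geq C\}$ with $\P(\theta \in C(\D)|\D) = 1-\alpha$, which by definition makes $C(\D)$ a $(1-\alpha)$-level credible region (this is the standard definition of the HPD). Second, I would establish the key lemma: for every $\theta_0 \in \Theta$, $\text{e-val}_{\D}(\theta_0) \leq \alpha$ if and only if $\theta_0 \notin C(\D)$. The argument is direct: by definition $T_{\D} = \{\theta : f(\theta|\D) \geq f(\theta_0|\D)\}$, so $\text{e-val}_{\D}(\theta_0) = 1 - \P(T_{\D}|\D) \leq \alpha$ is equivalent to $\P(T_{\D}|\D) \geq 1-\alpha$; comparing the threshold $f(\theta_0|\D)$ against the HPD cutoff $C$ gives that $f(\theta_0|\D) \leq C$ forces $C(\D) \subseteq T_{\D}$ (hence posterior mass $\geq 1-\alpha$), while $f(\theta_0|\D) > C$ forces $T_{\D} \subsetneq C(\D)$ (hence posterior mass $< 1-\alpha$).

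Third, with the lemma in hand, the equivalence of the two procedures is bookkeeping. For the accept branch:
\begin{align*}
\max_{\theta \in \Theta_0^c} \text{e-val}_{\D}(\theta) \leq \alpha
&\iff \forall\, \theta \in \Theta_0^c,\ \text{e-val}_{\D}(\theta)\leq \alpha \\
&\iff \forall\, \theta \in \Theta_0^c,\ \theta \notin C(\D) \\
&\iff C(\D) \subseteq \Theta_0,
\end{align*}
and an identical argument handles the reject branch with $\Theta_0$ and $\Theta_0^c$ swapped. The agnostic case follows as the complementary event. Finally, the identification with GFBST is immediate from its definition in \citet{Stern2017}, and the reduction to ROPE when $C(\D)$ is an interval follows because an HPD interval coincides with the usual credible interval used by \citet{kruschke2018rejecting}.

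The main obstacle is the subtle handling of the boundary $\{\theta: f(\theta|\D) = C\}$, where the inequality in the definition of $T_{\D}$ and the inequality defining the HPD region may differ on a posterior-null set. I would address this by either assuming absolute continuity of $f(\theta|\D)$ so that level sets have posterior measure zero, or by fixing the convention in the HPD definition (using $\geq$ on both sides) so that the equivalence $\text{e-val}_{\D}(\theta_0) \leq \alpha \iff \theta_0 \notin C(\D)$ holds exactly; either way the decisions induced on the three branches are unaffected.
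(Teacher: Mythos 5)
Your proposal is correct, and it is genuinely more self-contained than what the paper does: the paper's entire ``proof'' of Theorem \ref{thm:evals} is a citation to Example 8 of \citet{Esteves2016}, so it supplies no argument of its own. Your three-step argument is exactly the e-value analogue of the paper's own treatment of the frequentist case (Lemma \ref{lemma:pval} plus Property \ref{proper:pvals}): a pointwise lemma identifying $\{\text{e-val}_{\D}(\theta_0)\le\alpha\}$ with $\{\theta_0\notin C(\D)\}$, followed by the observation that a uniform bound over $\Theta_0^c$ (resp.\ $\Theta_0$) is equivalent to $C(\D)\cap\Theta_0^c=\emptyset$ (resp.\ $C(\D)\cap\Theta_0=\emptyset$), i.e.\ to containment in $\Theta_0$ (resp.\ $\Theta_0^c$). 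The one genuinely delicate point is the one you flag: when $\theta_0\in C(\D)$ you get $T_{\D}\subseteq C(\D)$, but strict containment of sets does not by itself give $\P(T_{\D}|\D)<1-\alpha$, so the reverse implication of your lemma needs the level set $\{\theta: C< f(\theta|\D)< f(\theta_0|\D)\}$ to carry positive posterior mass (or a matching convention on the inequalities, noting that the theorem writes the HPD with a strict inequality while $T_{\D}$ uses a non-strict one). Your proposed fixes --- assuming the posterior density has null level sets, or aligning the two inequality conventions --- are the standard ones and suffice; stating one of them explicitly as a hypothesis would make the lemma airtight. The identification with the GFBST is indeed definitional, and the ROPE case is just the specialization to interval-shaped $C(\D)$, as you say.
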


\begin{proof}
    Can be found in \citet[Example 8]{Esteves2016}.
\end{proof}

\bibliography{wileyNJD-AMA}%

\clearpage

\end{document}